\title{Structural Multi-type Sequent Calculus for Inquisitive Logic}
\author[1]{Sabine Frittella}
\author[1]{Giuseppe Greco}
\author[1,2]{Alessandra Palmigiano}
\author[1]{Fan Yang\footnote{This research has been made possible by the NWO Vidi grant 016.138.314, by the NWO Aspasia grant 015.008.054, and by a Delft Technology Fellowship awarded in 2013. }}
\affil[1]{Delft University of Technology, Delft, The Netherlands}
\affil[2]{Department of Pure and Applied Mathematics, University of Johannesburg, South Africa}
\date{}
\newcolumntype{C}[1]{>{\centering\arraybackslash}p{#1}}
\newtheorem{theorem}{Theorem}[section]
\newtheorem{lemma}[theorem]{Lemma}
\newtheorem{definition}[theorem]{Definition}
\newcommand{\Inql}{\ensuremath{\mathbf{InqL}}\xspace}
\newcommand{\CPC}{\ensuremath{\mathbf{CPL}}\xspace}
\newcommand{\IPC}{\ensuremath{\mathbf{IPL}}\xspace}
\newcommand{\PD}{\ensuremath{\mathbf{PD}}\xspace}
\newcommand{\MPD}{\ensuremath{\mathbf{MPD}}\xspace}
\newcommand{\dep}{\ensuremath{=\!\!}\xspace}
\newcommand{\cand}{\ensuremath{\sqcap}\xspace}
\newcommand{\cor}{\ensuremath{\sqcup}\xspace}
\newcommand{\cra}{\ensuremath{\rightarrowtriangle}\xspace}
\newcommand{\cdra}{\ensuremath{\mapsto}\xspace}
\newcommand{\cneg}{\ensuremath{{\sim}}\xspace}
\newcommand{\CRA}{\ensuremath{\sqsupset}\xspace}
\newcommand{\iand}{\ensuremath{\wedge}\xspace}
\newcommand{\ior}{\ensuremath{\vee}\xspace}
\newcommand{\bigior}{\ensuremath{\bigvee}\xspace}
\newcommand{\ira}{\ensuremath{\to}\xspace}
\newcommand{\idra}{\ensuremath{\rightarrowtail}\xspace}
\newcommand{\ineg}{\ensuremath{\neg}\xspace}
\newcommand{\bh}{\ensuremath{{\downarrow}}\xspace}
\newcommand{\BH}{\ensuremath{{\Downarrow}}\xspace}
\newcommand{\hb}{\ensuremath{{\mathrm{f}}}\xspace}
\newcommand{\HB}{\ensuremath{\mathrm{F}}\xspace}
\newcommand{\marginnote}[1]{\marginpar{\raggedright\tiny{#1}}}
\def\fCenter{{\mbox{$\ \vdash\ $}}}
\newcommand\val[1]{{\lbrack\!\lbrack} {#1}{\rbrack\!\rbrack}}
\newcommand{\MP}{\ensuremath{\mathsf{MP}}\xspace}
\def\mc{\multicolumn}
\begin{document}

\maketitle

\begin{abstract}
In this paper, we define a multi-type calculus for inquisitive logic, which is sound, complete and enjoys Belnap-style cut-elimination and subformula property.
Inquisitive logic is the logic of inquisitive semantics, a semantic framework developed by Groenendijk, Roelofsen and Ciardelli which captures both assertions and questions in natural language. Inquisitive logic is sound and complete w.r.t. the so-called state semantics (also known as team semantics). The Hilbert-style presentation of inquisitive logic is not closed under uniform substitution; indeed, some occurrences of formulas are restricted to a certain subclass of formulas, called flat formulas. This and other features make the quest for analytic calculi for this logic not straightforward. We develop a certain algebraic and order-theoretic analysis of the team semantics, which provides the guidelines for the design of a multi-type environment which accounts for two domains of interpretation, for flat and for general formulas, as well as for their interaction. This multi-type environment in its turn provides the semantic environment for the multi-type calculus for inquisitive logic we introduce in this paper.
\end{abstract}

\section{Introduction}

\label{sec:intro}
 
Inquisitive logic is the logic of inquisitive semantics \cite{GR_09,InquiLog}, a semantic framework that captures both assertions and questions in natural language. 
In this framework, sentences express proposals to enhance the common ground of a conversation. The inquisitive content of a sentence is understood as an issue raised by an utterance of the sentence. A distinguishing feature of inquisitive logic is that formulas are evaluated on \emph{information states}, i.e., a set of possible worlds, instead of single possible worlds. Inquisitive logic defines a relation of \emph{support} between information states and sentences, where the idea is that in uttering a sentence $\phi$, a speaker proposes to enhance the current common ground to one that supports $\phi$.

Closely related to inquisitive logic  is \emph{dependence logic} \cite{Van07dl}, which is an extension of classical logic that characterizes the notion of ``dependence'' using the so-called \emph{team semantics} \cite{hodges1997a,hodges1997b}. The team semantics of dependence logic builds on the basis of the notion of \emph{team}, which, in the propositional logic context, is a \emph{set} of valuations.  Possible worlds can be identified with valuations. Therefore, an information state is essentially a team, and the state semantics that inquisitive logic adopts is essentially team semantics. Technically, it was observed in \cite{Yang_dissertation} that inquisitive logic  is essentially a variant of propositional dependence logic \cite{VY_PD} with the intuitionistic connectives introduced in \cite{AbramskyVaananen08}. It was further argued in \cite{Ciardelli2015} that the entailment relation of questions is a type of dependency relation considered in dependence logic.

Inquisitive logic was axiomatized in \cite{InquiLog}, and this axiomatization is not closed under uniform substitution, which is a hurdle for a smooth proof-theoretic treatment for inquisitive logic. In \cite{Sano_09}, a labelled calculus was introduced for an earlier version of inquisitive logic, defined on the basis of the so called pair semantics \cite{Groenendijk09, Mascarenhas_msc}. The calculus in \cite{Sano_09} makes use of extra linguistic labels which import the pair semantics for inquisitive logic  into the calculus. This calculus is sound, complete and cut free; however, the proof of the soundness of the rules is very involved, since the interpretation of the sequents is ad hoc, and only a semantic proof of cut elimination is given. 

Our contribution is a calculus designed on different principles than those of \cite{Sano_09}, and for the version of inquisitive logic based on state semantics. We tackle the hurdle of the non schematicity of the Hilbert-style presentation by designing the calculus for inquisitive logic in the style of a generalization of Belnap's display calculi, the so-called  {\em multi-type calculi}. These calculi have been introduced in \cite{Multitype, PDL}, as a proposal to support a proof-theoretic semantic account of Dynamic Logics \cite{GAV}. One important aspect of multi-type calculi is that various Belnap-style metatheorems have been given, which allow for a smooth syntactic proof of cut elimination. 

The multi-type environment we propose is motivated by an order-theoretic analysis of the team semantics for inquisitive logic, according to which, certain maps can be defined which make it possible for the different types to interact. The non schematicity of the axioms is accounted for by assigning different types to the restricted formulas and to the general formulas. Hence, closure under arbitrary substitution holds {\em within each type}. 

\paragraph{Structure of the paper.} In Section \ref{sec:DL}, needed preliminaries are collected on inquisitive logic. In Section \ref{semanticanalysis}, the order-theoretic analysis is given, which justifies the introduction of an expanded multi-type language, into which the original language of inquisitive logic can be embedded. In Section \ref{sec:formal}, the multi-type calculus for (the multi-type version of) inquisitive logic is introduced. In Section \ref{sec:properties}, two properties of the calculus are shown: soundness, and the fact that the calculus is powerful enough to capture the restricted type (i.e.\ the flat type) proof-theoretically. In Section \ref{sec:cutelim}, we give a syntactic proof of cut elimination Belnap-style. The proof of completeness is relegated to Section \ref{sec:completeness}.


\section{Inquisitive logic}
\label{sec:DL}
In the present section, we briefly recall basic definitions and facts about inquisitive logic, and refer the reader to \cite{InquiLog,Ciardelli_thesis} for an expanded treatment.

Although the support-based semantics (or team semantics) is originally developed for the extension of classical propositional logic with questions, for the sake of a better compatibility with the exposition in the next sections, we will first define support-based semantics (or team semantics) for classical propositional logic. Let us fix a set $\mathsf{Prop}$ of proposition variables, and denote its elements by  $p, q,\dots$ 
Well-formed
formulas of \emph{classical propositional logic} (\CPC), also called \emph{classical formulas}, are given by the following grammar:
\[
    \chi::= \,p\mid 0\mid\chi\iand\chi\mid\chi\ira\chi.
\]
As usual, we write $\ineg\chi$ for $\chi \ira 0$.

  A \emph{possible world} (or a \emph{valuation}) is a map $v: \mathsf{Prop}\to 2$, where $2:=\{0,1\}$. An information \emph{state} (also called a \emph{team}) is a set of possible worlds. 

\begin{definition}
\label{TS_cpc}
The {\em support} relation of a classical formula $\chi$ on a state $S$, denoted $S\models\chi$, is defined recursively   as follows: 
\begin{center}
\begin{tabular}{l c l}
$S\models p$ & iff & $v(p)=1$ for all $v\in S$\\
$S\models 0$ & iff & $S=\varnothing$\\
$S\models\chi\iand\xi$ & iff & $S\models\chi$ and $S\models\xi$\\
$S\models \chi\ira\xi$ & iff & for all $S'\subseteq S$, if $S'\models\chi$, then $S'\models\xi$\\
\end{tabular}
\end{center}
\end{definition}
%
An easy inductive proof shows that classical formulas $\chi$ are \emph{flat} (also called  \emph{truth conditional}); 
that is, for every  state $S$,
\begin{description}
\item[(Flatness Property)] $S\models\chi~~\mbox{ iff }~~ \{v\}\models \chi$ for any  $v\in S~~ \mbox{ iff } ~~ v(\chi)=1$ for any  $v\in S$.
\end{description}
%

Well-formed
formulas $\phi$ of \emph{inquisitive logic} (\Inql)  are given by expanding the language of \CPC with the connective $\ior$. Equivalently, these formulas can be defined by the following recursion:
\[
    \phi::= \,\chi\mid\phi\iand\phi\mid\phi\ira\phi\mid \phi\ior\phi.
\]
This two-layered presentation is slightly different but equivalent to the usual one. The reason why we are presenting it this way will be clear at the end of the following section, when we introduce a translation of \Inql-formulas  into a multi-type language.

\begin{definition}\label{TS_PT}
The {\em support relation} of formulas $\phi$ of \Inql on a state $S$, denoted $S\models\phi$, is defined analogously to the support of  classical formulas relative to the fragment shared by the two languages, and moreover:   
\smallskip

\noindent

\begin{center}
\begin{tabular}{l c l}
 $S\models \phi\ior\psi$ & iff & $S\models \phi$ or $S\models\psi$.
\end{tabular}
\end{center}
\smallskip

\noindent We write $\phi\models\psi$ if, for any  state $S$, if $S\models\phi$ then $S\models\psi$. If both $\phi\models\psi$ and $\psi\models\phi$, then we write $\phi\equiv\psi$. An \Inql-formula $\phi$  is \emph{valid}, denoted  $\models\phi$, if $S\models\phi$  for any  state $S$. The \emph{logic} \Inql is the set of all valid \Inql-formulas.
\end{definition}


An easy inductive proof shows that \Inql-formulas have 
the downward closure property and the empty team property:
\begin{description}
\item[(Downward Closure Property)] If $S\models\phi$ and $S'\subseteq S$, then $S'\models\phi$.
\item[(Empty Team Property)] $\varnothing\models\phi$.
\end{description}

\CPC extended with the dependence atoms $\dep(p_1,\dots,p_n,q)$ is called propositional dependence logic (\PD), which is an important variant of \Inql. \PD adopts also the state semantics (or the team semantics). It is proved in \cite{VY_PD} that \PD has the same expressive power as \Inql.  
In particular, a constancy dependence atom $\dep(p)$ is semantically equivalent to the formula $p\ior\ineg p$, which expresses the \emph{polar question} `whether $p$?' (denoted $?p$), and a dependence atom $\dep(p_1,\dots,p_n,q)$ with multiple arguments is semantically equivalent to the entailment $?p_1\wedge \dots\wedge ?p_n\ira ?q$ of polar questions. For more details on this connection, we refer the reader to \cite{Ciardelli2015}.

Flat formulas will play an important role in this paper. Below we list some of their properties.  

\begin{lemma}[see \cite{ivano_msc}]
For all \Inql-formulas $\phi$ and $\psi$,
\begin{itemize}
\item If $\psi$ is flat, then $\phi\ira\psi$ is flat. In particular, $\ineg\phi$ is always flat.
\item The following are equivalent:
\begin{enumerate}
\item $\phi$ is flat.
\item $\phi\equiv\phi^{\mathsf{f}}$, where $\phi^{\mathsf{f}}$ is the classical formula obtained from $\phi$ by replacing every occurrence of $\phi_1\ior \phi_2$ in $\phi$ by $\ineg \phi_1 \ira \phi_2$.
\item $\phi\equiv \ineg\ineg\phi$.
\end{enumerate}
\end{itemize}
\end{lemma}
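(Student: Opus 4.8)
The plan is to handle the two bullet points separately, relying throughout on two observations. First, by the Downward Closure Property the forward implication in the Flatness Property holds for \emph{every} \Inql-formula (since $\{v\}\subseteq S$), so ``$\phi$ is flat'' reduces to the single backward implication: if $\{v\}\models\phi$ for all $v\in S$, then $S\models\phi$. Second, on a singleton the support clauses collapse to ordinary classical truth-value evaluation; in particular $\{v\}\models 0$ never holds, $\{v\}\models\phi\ira\psi$ iff ($\{v\}\models\phi$ implies $\{v\}\models\psi$), and hence $\{v\}\models\ineg\xi$ iff $\{v\}\not\models\xi$.

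For the first bullet, assume $\psi$ is flat and suppose $\{v\}\models\phi\ira\psi$ for every $v\in S$; I must derive $S\models\phi\ira\psi$. Unfolding the clause for $\ira$, take any $S'\subseteq S$ with $S'\models\phi$ and any $v\in S'$. By Downward Closure $\{v\}\models\phi$, and since $v\in S$ we have $\{v\}\models\phi\ira\psi$, whence $\{v\}\models\psi$. Thus $\{v\}\models\psi$ for all $v\in S'$, and flatness of $\psi$ gives $S'\models\psi$, establishing $S\models\phi\ira\psi$. The ``in particular'' clause then follows because $0$ is flat (both $S\models 0$ and ``$\{v\}\models 0$ for all $v\in S$'' are equivalent to $S=\varnothing$), so $\ineg\phi=\phi\ira 0$ is flat.

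For the equivalences I would first record two easy facts: flatness is preserved under $\equiv$ (immediate from the Flatness Property), and a direct computation from the clauses for $\ira$ and $0$ together with Downward Closure gives, for any $\eta$, that $S\models\ineg\eta$ iff $\{v\}\not\models\eta$ for all $v\in S$; applying this twice yields the key identity $S\models\ineg\ineg\phi$ iff $\{v\}\models\phi$ for all $v\in S$. With these in hand, (1)$\Rightarrow$(3) holds because $\phi\models\ineg\ineg\phi$ for every formula (by Downward Closure the right-hand side of the identity holds whenever $S\models\phi$), while the converse entailment is exactly the backward implication of flatness; and (3)$\Rightarrow$(1) holds because $\ineg\ineg\phi$ is flat by the first bullet and flatness transfers along $\equiv$. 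For the remaining equivalence the core is the claim that $\phi$ and $\phi^{\mathsf{f}}$ agree on every singleton, $\{v\}\models\phi$ iff $\{v\}\models\phi^{\mathsf{f}}$, proved by induction on $\phi$; the only nontrivial step is $\phi_1\ior\phi_2$, where on a singleton $\{v\}\models\ineg\phi_1^{\mathsf{f}}\ira\phi_2^{\mathsf{f}}$ unwinds to ``$\{v\}\models\phi_1^{\mathsf{f}}$ or $\{v\}\models\phi_2^{\mathsf{f}}$'', matching the clause for $\ior$ once the induction hypotheses are applied to $\phi_1,\phi_2$. Then (2)$\Rightarrow$(1) is immediate, since $\phi^{\mathsf{f}}$ is classical and hence flat, and flatness transfers along $\equiv$; and (1)$\Rightarrow$(2) follows by chaining flatness of $\phi$, singleton agreement, and flatness of $\phi^{\mathsf{f}}$: for each $S$ we get $S\models\phi$ iff $\{v\}\models\phi$ for all $v\in S$ iff $\{v\}\models\phi^{\mathsf{f}}$ for all $v\in S$ iff $S\models\phi^{\mathsf{f}}$.

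The bookkeeping is routine; the two steps most worth stating carefully are the double-negation identity, which is what pins down the semantic content of ``flat'' as ``determined by its one-world restrictions'', and the $\ior$ case of the singleton-agreement induction, where the defined translation $\ineg\phi_1\ira\phi_2$ is seen to reproduce the classical truth table of disjunction. Neither presents a genuine obstacle, so the main care is simply in marshalling Downward Closure and the singleton-classicality observation at the right places.
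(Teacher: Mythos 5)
Your proof is correct. The paper itself gives no proof of this lemma --- it is quoted with a pointer to the cited reference --- but your argument is the standard one: reduce flatness to the single nontrivial direction via Downward Closure, observe that support on singletons is classical truth-evaluation (so that $S\models\ineg\ineg\phi$ iff $\{v\}\models\phi$ for all $v\in S$, and $\phi$, $\phi^{\mathsf{f}}$ agree on singletons), and then chain these facts together; all the individual steps check out, including the use of the Empty Team Property to dispose of the empty subteam in the singleton computations.
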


%
Below we list some meta-logical properties of \Inql; for the proof, see \cite{InquiLog}. For any set  $\Gamma\cup\{\phi,\psi\}$ of \Inql-formulas:
\begin{description}
\item[(Deduction Theorem)] \(\Gamma,\phi\models\psi\text{ if and only if }\Gamma\models\phi\to\psi.\)
\item[(Disjunction Property)] If $\models\phi\ior\psi$, then either $\models\phi$ or $\models\psi$.
\item[(Compactness)] If $\Gamma\models\phi$, then there exists a finite subset $\Delta$ of $\Gamma$ such that $\Delta\models\phi$.
\end{description}
%

\begin{theorem}[see \cite{InquiLog,Ciardelli_thesis}]
\label{thm:fan:soundcomplete}
The following Hilbert-style system of \Inql is sound and complete.

\begin{description}
\item[Axioms:] \
\begin{enumerate}
\item all substitution instances of \IPC axioms
\item $(\chi \ira (\phi \ior \psi)) \ira (\chi \ira \phi) \ior (\chi \ira \psi)$ whenever $\chi$ is a classical formula
\item $\ineg\ineg \chi\ira \chi$  whenever $\chi$ is a classical formula
\end{enumerate}


\item[Rule:] \
\begin{description}
\item[ Modus Ponens:] \AxiomC{$\phi\to \psi$} \AxiomC{$\psi$}\BinaryInfC{$\psi$} \DisplayProof ~(\MP)
\end{description}
\end{description}
\end{theorem}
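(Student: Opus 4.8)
The plan is to check separately that each axiom is valid and that \MP preserves validity. For the \IPC axioms the key observation is that, over any fixed set of variables, the nonempty downward-closed collections of states (these are exactly the interpretations of formulas, by the Downward Closure and Empty Team properties) form a Heyting algebra: meet is intersection, the constant $0$ is interpreted as $\{\varnothing\}$, and the support clause for $\ira$ is precisely the Heyting implication of this downset lattice. Hence every intuitionistic theorem, and so every substitution instance of an \IPC axiom, is supported by all states. For axiom (3) validity is immediate from the flatness lemma, since a classical $\chi$ satisfies $\chi\equiv\ineg\ineg\chi$. Axiom (2) is the only genuinely semantic check: unfolding the clauses for $\ira$ and $\ior$ and invoking the Flatness Property of $\chi$, one shows that whenever a substate supports $\chi\ira(\phi\ior\psi)$ the disjunct may be chosen uniformly across substates, so the antecedent entails $(\chi\ira\phi)\ior(\chi\ira\psi)$; flatness of $\chi$ is exactly what licenses this uniform choice. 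Finally, \MP preserves validity directly from the $\ira$-clause, instantiating $S'$ to $S$.

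\textbf{Completeness.} For the converse I would follow the standard route through inquisitive normal forms. To each formula $\phi$ I assign a finite set $\mathcal{R}(\phi)$ of \emph{resolutions}, each of which is a classical formula, defined by recursion (disjunction takes the union of resolutions, while conjunction and implication combine them in the expected way). The heart of the matter is the \emph{provable normal form} theorem,
\[
\vdash\ \phi\ \ilra\ \bigior_{\alpha\in\mathcal{R}(\phi)}\alpha ,
\]
which I would prove by induction on $\phi$. The crucial case is implication. Given the induction hypotheses $\vdash\phi\ilra\bigior_i\alpha_i$ and $\vdash\psi\ilra\bigior_j\beta_j$, one rewrites $\phi\ira\psi$ as $(\bigior_i\alpha_i)\ira(\bigior_j\beta_j)$ and applies three provable equivalences in turn: the intuitionistic law $(\bigior_i\alpha_i)\ira\gamma\ \ilra\ \bigwedge_i(\alpha_i\ira\gamma)$; the Kreisel--Putnam-style equivalence $\alpha_i\ira\bigior_j\beta_j\ \ilra\ \bigior_j(\alpha_i\ira\beta_j)$, whose nontrivial direction is obtained by iterating axiom (2) over the disjuncts of the consequent while the classical antecedent $\alpha_i$ stays fixed; and the intuitionistic distributivity of $\iand$ over $\ior$. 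The outcome is a disjunction of conjunctions of implications $\alpha_i\ira\beta_j$; by the flatness lemma each such implication is classical, so each disjunct is classical, matching the definition of $\mathcal{R}(\phi\ira\psi)$.

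With the provable normal form available, completeness follows quickly. Suppose $\models\phi$. Applying the soundness direction to the normal-form equivalence gives $\models\bigior_{\alpha}\alpha$, and the semantic Disjunction Property (extended to finite disjunctions) yields $\models\alpha_0$ for some resolution $\alpha_0$. Since $\alpha_0$ is classical, the Flatness Property reduces its validity to being a classical tautology; and because the calculus contains all \IPC axioms together with axiom (3), its classical fragment is complete for \CPC, whence $\vdash\alpha_0$. Disjunction introduction then gives $\vdash\bigior_\alpha\alpha$, and a final \MP step against the normal-form equivalence delivers $\vdash\phi$.

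\textbf{Main obstacle.} The real work is concentrated in the implication case of the provable normal form, and specifically in the Kreisel--Putnam step: axiom (2) is stated only for a single binary disjunction with a classical antecedent, so one must iterate it to reach $\bigior_j\beta_j$ and verify at each stage that the antecedent remains classical and that the resulting implications stay flat. The secondary nontrivial point is confirming that the classical fragment of the system really does prove every classical tautology, which is exactly where axiom (3) becomes indispensable.
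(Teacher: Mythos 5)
Your argument is correct and is essentially the proof this paper relies on: the theorem is stated without proof, with a pointer to Ciardelli--Roelofsen, and the cited proof proceeds exactly along your lines --- soundness by direct verification of each axiom (flatness of the classical antecedent $\chi$ doing the work for the split axiom, just as in Lemma \ref{lemma: soundness of rules}), and completeness via the provable disjunctive normal form into classical resolutions, the disjunction property, and completeness of the classical fragment obtained from \IPC plus double negation elimination. I see no gaps; your two flagged ``obstacles'' (iterating axiom (2) while the antecedent stays classical, and \CPC-completeness of the flat fragment) are handled correctly.
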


Clearly, the syntax of \Inql is the same as that of intuitionistic propositional logic (\IPC), but the connections between inquisitive and intuitionistic logic are in fact much deeper. Indeed, it was proved in \cite{InquiLog} that for every intermediate logic $\mathsf{L}$, 
\footnote{ Recall that $\mathsf{L}$ is an intermediate logic if $\IPC\subseteq \mathsf{L}\subseteq \CPC$.} letting $\mathsf{L}^\neg=\{\phi\mid\phi^\neg\in\mathsf{L}\}$ be the \emph{negative variant} of $\mathsf{L}$, where $\phi^\neg$ is obtained from $\phi$ by replacing any occurrence of a propositional variable $p$ with $\ineg p$, then \Inql coincides with the negative variant of every intermediate logic that is between Maksimova's logic $\mathsf{ND}$ \cite{MaksimovaLog86} and Medvedev's logic $\mathsf{ML}$ \cite{MedvedevLog}, such as the Kreisel-Putnam logic $\mathsf{KP}$ \cite{KrsPutnamLog57}.


\begin{theorem}[see \cite{InquiLog}]
For any intermediate logic $\mathsf{L}$ such that $\mathsf{ND}\subseteq \mathsf{L}\subseteq \mathsf{ML}$, we have $\mathsf{L}^\neg=\Inql$. In particular, $\Inql=\mathsf{KP}^\neg=\mathsf{ND}^\neg=\mathsf{ML}^\neg$.
\end{theorem}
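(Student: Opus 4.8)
The plan is to prove that $\mathsf{L}^\neg = \Inql$ for every intermediate logic $\mathsf{L}$ with $\mathsf{ND} \subseteq \mathsf{L} \subseteq \mathsf{ML}$ by establishing the two inclusions $\mathsf{ML}^\neg \subseteq \Inql$ and $\Inql \subseteq \mathsf{ND}^\neg$, since the monotonicity of the operation $\mathsf{L} \mapsto \mathsf{L}^\neg$ (which is immediate from its definition) will then squeeze every intermediate $\mathsf{L}$ between these bounds. Indeed, if $\mathsf{ND} \subseteq \mathsf{L} \subseteq \mathsf{ML}$, then $\mathsf{ND}^\neg \subseteq \mathsf{L}^\neg \subseteq \mathsf{ML}^\neg$, so once I show $\mathsf{ML}^\neg \subseteq \Inql \subseteq \mathsf{ND}^\neg$ all the intermediate negative variants collapse to $\Inql$.

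For the inclusion $\Inql \subseteq \mathsf{ND}^\neg$, I would argue syntactically using the Hilbert-style axiomatization from Theorem~\ref{thm:fan:soundcomplete}. The strategy is to show that for each axiom $\phi$ of \Inql, the negated translation $\phi^\neg$ (equivalently, the formula witnessing membership) is derivable in $\mathsf{ND}$, and that the set $\mathsf{ND}^\neg$ is closed under modus ponens. The key observation driving this is the flatness phenomenon: applying the negative translation sends every propositional variable $p$ to $\ineg p$, and since $\ineg\phi$ is always flat (by the Lemma cited from \cite{ivano_msc}), the negated variables behave classically. This means the \IPC axioms translate into \IPC-theorems, the double-negation axiom instances and the Kreisel-Putnam-style split axiom (axiom scheme~2) translate into formulas provable already in $\mathsf{ND}$ using $\mathsf{ND}$'s characteristic axiom. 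The nonschematicity of the \Inql axiomatization — the restriction to classical $\chi$ in axioms 2 and 3 — is exactly mirrored by the flatness of negated subformulas, so the translation respects the side conditions.

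For the reverse inclusion $\mathsf{ML}^\neg \subseteq \Inql$, I would proceed semantically rather than syntactically, using the soundness half of Theorem~\ref{thm:fan:soundcomplete} together with the Kripke-semantic characterization of Medvedev's logic $\mathsf{ML}$. The idea is that the state semantics for \Inql can be presented as an intuitionistic Kripke frame whose worlds are the nonempty states ordered by reverse inclusion (or the information states with their natural refinement order), and that this frame validates precisely Medvedev's logic under the negative translation. Concretely, one shows that $S \models \phi$ in the team semantics corresponds to forcing $\phi^\neg$ at an appropriate point of the Medvedev frame, so that $\phi \in \mathsf{ML}^\neg$, meaning $\phi^\neg \in \mathsf{ML}$, forces $\phi$ to be valid in all states and hence $\phi \in \Inql$. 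This requires matching the downward closure and empty team properties of \Inql against the persistence and bottom conditions of the Kripke frames underlying $\mathsf{ML}$.

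The main obstacle I anticipate is the semantic inclusion $\mathsf{ML}^\neg \subseteq \Inql$: pinning down the precise correspondence between Medvedev frames (finite frames of nonempty subsets of a finite set ordered by reverse inclusion) and the state semantics, and verifying that the negative translation intertwines the two forcing relations exactly, is delicate. In particular, one must handle the behavior of $\ior$ (whose \Inql clause is genuinely disjunctive on states, unlike the classical $\ior$) and check that the double-negation/flatness interplay makes the two semantics agree on all formulas, not merely on flat ones. The syntactic direction $\Inql \subseteq \mathsf{ND}^\neg$ is more routine, reducing to an axiom-by-axiom verification once the flatness of negated variables is invoked. I would cite \cite{InquiLog} for the technical frame-correspondence lemmas rather than reproving them in full.
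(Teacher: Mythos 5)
First, a point of reference: the paper does not prove this theorem but imports it from \cite{InquiLog}, so there is no in-paper argument to compare yours against. Your overall architecture --- monotonicity of $(\cdot)^\neg$ together with the two bounds $\Inql\subseteq\ND^\neg$ and $\mathsf{ML}^\neg\subseteq\Inql$, the first syntactic and the second via the observation that the non-empty states ordered by reverse inclusion form a Medvedev frame on which the inquisitive valuation is a negative valuation --- is indeed the strategy of the proof in \cite{InquiLog}, and the squeeze $\ND^\neg\subseteq\mathsf{L}^\neg\subseteq\mathsf{ML}^\neg\subseteq\Inql\subseteq\ND^\neg$ is sound.

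However, the direction you call ``routine'' contains the real difficulty, and as written it does not close. Maksimova's \ND is strictly weaker than \KP: its characteristic schema is $(\ineg\chi\ira(\ineg\psi_1\ior\cdots\ior\ineg\psi_k))\ira((\ineg\chi\ira\ineg\psi_1)\ior\cdots\ior(\ineg\chi\ira\ineg\psi_k))$, which requires the \emph{disjuncts}, not only the antecedent, to be negations. The negative translation of axiom scheme 2 of Theorem \ref{thm:fan:soundcomplete} is $(\chi^\neg\ira(\phi^\neg\ior\psi^\neg))\ira((\chi^\neg\ira\phi^\neg)\ior(\chi^\neg\ira\psi^\neg))$; here $\chi^\neg$ is \IPC-equivalent to a negation (a classical formula built from negated atoms always is), so you do obtain an instance of \KP, but $\phi^\neg$ and $\psi^\neg$ are arbitrary (e.g.\ $\ineg p\ior\ineg q$ is not a negation), so this is \emph{not} an instance of the \ND schema, and ``flatness of negated variables'' does not repair this: flatness is a semantic property and yields no derivation in \ND. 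Your axiom-by-axiom check therefore only establishes $\Inql\subseteq\KP^\neg$. The missing ingredient is the disjunctive-negative normal form (every \Inql-formula is provably equivalent, already over \IPC plus the \ND schema plus atomic double negation, to a disjunction of negations), equivalently the re-axiomatization of \Inql with the \ND schema in place of axiom 2; that is where the actual work lies in \cite{InquiLog}. Two smaller gaps on the semantic side: Medvedev frames are finite, so you must first restrict to the finitely many variables occurring in $\phi$ via a locality lemma before identifying the state frame with a Medvedev frame; and you must actually exhibit the persistent valuation $V'$ witnessing that the inquisitive valuation is negative --- take $V'(p)$ to consist of the singleton states $\{v\}$ with $v(p)=0$, so that $S\models p$ exactly when $\ineg p$ is forced at $S$ under $V'$.
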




\section{Order-theoretic analysis and multi-type inquisitive logic}
\label{semanticanalysis}
In the present section, building on \cite{AbramskyVaananen08, Roelofsen_algebraic_13}, and using standard facts pertaining to discrete Stone and Birkhoff dualities, we give an alternative algebraic presentation of the team semantics. This presentation shows how two natural types emerge from the team semantics, together with natural maps connecting them. These maps will  support the interpretation of additional {\em multi-type} connectives which will be used to define a new, multi-type language into which we will translate the original language and axioms of inquisitive logic. Finally, in Section \ref{sec:formal} we will introduce a structural  multi-type sequent calculus for the translated axiomatization.

\subsection{Order-theoretic analysis}
\label{ssec:semanticanalysis}


In what follows, we let $V$ abbreviate the initial set $\mathsf{Prop}$ of proposition variables; we let $2^V$ denote the set of Tarski assignments. Elements of $2^V$ are denoted by the variables $u$ and $v$, possibly sub- and super-scripted. Let $\mathbb{B}$ denote the (complete and atomic) Boolean algebra $(\mathcal{P}(2^V), \cap, \cup, (\cdot)^c,\varnothing, 2^V)$. Elements of $\mathbb{B}$ are information states (teams), and are denoted by the variables $S, T$ and $U$, possibly sub- and super-scripted. Consider the relational structure $\mathcal{F} = (\mathcal{P}(2^V), \subseteq)$
By discrete Birkhoff-type duality, a perfect Heyting algebra\footnote{A Heyting algebra is {\em perfect} if it is complete, completely distributive and completely join-generated by its completely join-prime elements. Equivalently, any perfect algebra can be characterized up to isomorphism as the complex algebra of some partially ordered set.} 
arises as the complex algebra of $\mathcal{F}$. Indeed,
let $\mathbb{A}: = (\mathcal{P}^{\downarrow}(\mathbb{B}), \cap, \cup, \Rightarrow, \varnothing, \mathcal{P}(2^V))$. Elements of $\mathbb{A}$ are downward closed collections of teams, and are denoted by the variables $\mathcal{X}, \mathcal{Y}$ and $\mathcal{Z}$, possibly sub- and super-scripted. The operation $\Rightarrow$ is defined as follows: for any $\mathcal{Y}$ and $\mathcal{Z}$,
\begin{align*}
\mathcal{Y}\Rightarrow\mathcal{Z} & := \{S\mid \mbox{ for all } S', \mbox{ if } S'\subseteq S \mbox{ and } S'\in \mathcal{Y}, \mbox{ then } S'\in \mathcal{Z}\}.
\end{align*}




Three natural maps can be defined between the perfect Boolean algebra $\mathbb{B}$ and the perfect HAO $\mathbb{A}$. Indeed, any team $S$ can be associated with the downward-closed collection of teams ${\downarrow}S: = \{S'\mid S'\subseteq S\}$. Conversely, any (downward-closed) collection of teams $\mathcal{X}$ can be associated with the team $\hb\mathcal{X} := \bigcup\mathcal{X} = \{v\mid v\in S$  for some $S\in \mathcal{X}\}.$ Thirdly, for any team $S$, the collection of teams  $\hb^\ast: = \{\{v\}\mid v\in X\}\cup\{\varnothing\}$ is downward closed. These assignments respectively define the following maps:
\[\bh: \mathbb{B}\to \mathbb{A}\quad\quad \hb: \mathbb{A}\to \mathbb{B}\quad \quad \hb^{\ast}: \mathbb{B}\to \mathbb{A}.\]

The maps $\hb^{\ast}$, $\bh$ and $\hb$ turn out to be adjoints to one another as follows:\footnote{In order-theoretic notation we write  $\hb^{\ast}\dashv \hb\dashv \bh$).}

\begin{lemma}
\label{lemma: bh left adjoint of hb}
For all $S\in \mathbb{B}$ and $\mathcal{X}\in \mathbb{A}$,
\begin{align}
\hb \mathcal{X}\subseteq S \quad \mbox{ iff }\quad \mathcal{X}\subseteq \bh S && \text{and} &&
\hb^{\ast}S\subseteq \mathcal{X} \quad \mbox{ iff }\quad S\subseteq \hb\mathcal{X}.
\end{align}
\end{lemma}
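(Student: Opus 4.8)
The plan is to prove both biconditionals by simply unwinding the definitions of the three maps and arguing pointwise (at the level of worlds) or by a one-line inclusion; nothing beyond the explicit descriptions of \bh, \hb and \hb$^{\ast}$ together with the downward closure of the elements of $\mathbb{A}$ is needed. In fact the two equivalences are instances of familiar order-theoretic adjunctions — the left-hand one being precisely the adjunction between taking unions and forming principal downsets — so all the content lies in routine bookkeeping with teams and worlds.

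First I would treat $\hb\mathcal{X}\subseteq S$ iff $\mathcal{X}\subseteq\bh S$. For the left-to-right direction, assume $\bigcup\mathcal{X}\subseteq S$ and pick any $T\in\mathcal{X}$; since $T\subseteq\bigcup\mathcal{X}\subseteq S$ we get $T\in\bh S$, hence $\mathcal{X}\subseteq\bh S$. For the converse, assume $\mathcal{X}\subseteq\bh S$ and take $v\in\hb\mathcal{X}=\bigcup\mathcal{X}$; then $v\in T$ for some $T\in\mathcal{X}$ with $T\subseteq S$, whence $v\in S$. Note that this half of the lemma does not even use downward closure of $\mathcal{X}$.

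Next I would treat $\hb^{\ast}S\subseteq\mathcal{X}$ iff $S\subseteq\hb\mathcal{X}$, recalling that $\hb^{\ast}S=\{\{v\}\mid v\in S\}\cup\{\varnothing\}$. For left-to-right, assume $\hb^{\ast}S\subseteq\mathcal{X}$ and take $v\in S$; then $\{v\}\in\hb^{\ast}S\subseteq\mathcal{X}$, so $v\in\bigcup\mathcal{X}=\hb\mathcal{X}$, giving $S\subseteq\hb\mathcal{X}$. For the converse, assume $S\subseteq\bigcup\mathcal{X}$; for each $v\in S$ there is $T\in\mathcal{X}$ with $v\in T$, and since $\{v\}\subseteq T$ and $\mathcal{X}$ is downward closed we obtain $\{v\}\in\mathcal{X}$, and likewise $\varnothing\in\mathcal{X}$, so $\hb^{\ast}S\subseteq\mathcal{X}$.

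The only point requiring genuine care — and what I expect to be the main, if minor, obstacle — is this last direction: deriving $\{v\}\in\mathcal{X}$ and $\varnothing\in\mathcal{X}$ from $S\subseteq\bigcup\mathcal{X}$ relies essentially on downward closure of $\mathcal{X}$ together with the fact that the relevant collections in $\mathbb{A}$ contain $\varnothing$ (equivalently, are nonempty), which is the algebraic counterpart of the empty team property. It is therefore worth fixing the convention on the empty collection so that the degenerate case $S=\varnothing$ is covered cleanly; everything else is purely definitional.
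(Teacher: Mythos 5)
Your proof is correct and is precisely the routine definitional unwinding the authors intend: the paper states this lemma without proof, and your two-directional arguments (the first adjunction needing nothing beyond the definitions, the second needing downward closure of $\mathcal{X}$) are exactly what a written-out proof would contain. Your closing caveat is moreover a genuine point rather than mere bookkeeping: since $\mathbb{A}$ as defined takes the \emph{empty} collection as its bottom element, the second equivalence literally fails for $\mathcal{X}=\varnothing$ (take $S=\varnothing$: then $S\subseteq \hb\mathcal{X}$ holds vacuously, yet $\varnothing\in\hb^{\ast}S$ while $\hb^{\ast}S\not\subseteq\mathcal{X}$), so the adjunction $\hb^{\ast}\dashv\hb$ requires restricting to nonempty (equivalently, $\varnothing$-containing) downward-closed collections, a convention the paper leaves implicit.
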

By general order-theoretic facts, from these adjunctions it follows that $\bh$, $\hb$ and $\hb^*$ are all order-preserving (monotone), and moreover, $\bh$ preserves all meets of $\mathbb{B}$ (including the empty one, i.e.\ $\bh 1^{\mathbb{B}} = \top^{\mathbb{A}}$), that is, $\bh$ commutes with arbitrary intersections, $\hb$ preserves all joins and all meets of $\mathbb{A}$, that is, $\hb$ commutes with arbitrary unions and intersections, and $\hb^*$ preserves all joins  of $\mathbb{B}$, that is, $\hb$ commutes with arbitrary unions.
Notice also that  for all $\mathcal{X}\in \mathbb{A}$ and $S, T\in \mathbb{B}$,
\begin{equation}
\label{eq:soundness of rules}
\mathcal{X}\subseteq \bh \hb(\mathcal{X})\quad \mbox{ and }\quad S\subseteq T \ \mbox{ implies } \hb^*(S)\subseteq \bh T.
\end{equation}

The following lemma  will be needed to prove the soundness of the rule KP of the calculus introduced in section \ref{sec:formal}.

\begin{lemma}
\label{lemma: soundness of rules}
For all $X$, $\mathcal{Y},\mathcal{Z}$,
\begin{center} \label{lem:sem:item2}
$\bh X\Rightarrow (\mathcal{Y}\cup \mathcal{Z})\subseteq (\bh X\Rightarrow \mathcal{Y})\cup (\bh X\Rightarrow \mathcal{Z})$;
\end{center}
\end{lemma}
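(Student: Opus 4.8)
The plan is to reduce membership in $\bh X \Rightarrow \mathcal{W}$ to a single membership condition, exploiting that $\bh X$ is the principal downset of $X$ and that every element of $\mathbb{A}$ is downward closed. First I would unfold the definition of $\Rightarrow$: a team $S$ belongs to $\bh X \Rightarrow \mathcal{W}$ exactly when every $S'$ with $S'\subseteq S$ and $S'\in\bh X$ lies in $\mathcal{W}$. Since $S'\in\bh X$ means precisely $S'\subseteq X$, the two conditions $S'\subseteq S$ and $S'\subseteq X$ together amount to $S'\subseteq S\cap X$. Hence $S\in(\bh X \Rightarrow \mathcal{W})$ if and only if $\bh(S\cap X)\subseteq\mathcal{W}$.

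The key step is then to observe that, because $\mathcal{W}$ is downward closed, $\bh(S\cap X)\subseteq\mathcal{W}$ holds if and only if its greatest element $S\cap X$ is a member of $\mathcal{W}$. Indeed, if $S\cap X\in\mathcal{W}$ then downward closure forces every subset of $S\cap X$ into $\mathcal{W}$, and conversely $S\cap X\in\bh(S\cap X)$ so membership of the whole downset in $\mathcal{W}$ gives $S\cap X\in\mathcal{W}$. This yields the clean characterization: $S\in(\bh X \Rightarrow \mathcal{W})$ if and only if $S\cap X\in\mathcal{W}$, valid for every $\mathcal{W}\in\mathbb{A}$.

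With this in hand the inclusion is immediate. I would take an arbitrary $S\in \bh X \Rightarrow (\mathcal{Y}\cup\mathcal{Z})$; by the characterization $S\cap X\in\mathcal{Y}\cup\mathcal{Z}$, so $S\cap X\in\mathcal{Y}$ or $S\cap X\in\mathcal{Z}$. Applying the characterization in the reverse direction to whichever disjunct holds gives $S\in(\bh X \Rightarrow \mathcal{Y})$ or $S\in(\bh X \Rightarrow \mathcal{Z})$, that is, $S\in(\bh X \Rightarrow \mathcal{Y})\cup(\bh X \Rightarrow \mathcal{Z})$, as required. Here I rely on the fact that on downward closed collections the join $\cup$ of $\mathbb{A}$ coincides with ordinary set union, so no extra care is needed when unwinding $\mathcal{Y}\cup\mathcal{Z}$.

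The only genuinely substantive point is the reduction carried out in the first two paragraphs; once $\bh X \Rightarrow \mathcal{W}$ is seen to be governed by the single membership $S\cap X\in\mathcal{W}$, distributing the implication over the join is just the trivial fact that membership in a union is a disjunction of memberships. I do not expect any obstacle beyond making the downward-closure argument explicit. It is worth noting that the same characterization, together with the monotonicity of $\Rightarrow$ in its second argument, would also give the reverse inclusion, so in fact equality holds; only the stated inclusion is needed for the soundness of \splitrule.
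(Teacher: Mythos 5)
Your proof is correct, and it takes a genuinely different route from the one in the paper. The paper argues directly by contraposition: assuming $W\in \bh X\Rightarrow(\mathcal{Y}\cup\mathcal{Z})$ and $W\notin \bh X\Rightarrow\mathcal{Z}$, it extracts a witness $W'\subseteq W\cap X$ with $W'\notin\mathcal{Z}$ and then tries to show that every $Z\subseteq W\cap X$ lands in $\mathcal{Y}$; it never performs the reduction you carry out. Your key move --- the characterization $S\in(\bh X\Rightarrow\mathcal{W})$ iff $S\cap X\in\mathcal{W}$, obtained from the fact that $\bh X$ is a principal downset and $\mathcal{W}$ is downward closed --- collapses the universally quantified implication into a single membership test, after which distributing over $\cup$ is immediate and equality comes for free. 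This is arguably the better organization: the printed proof contains a slip at exactly the point your reduction avoids, namely the step ``$W\notin\mathcal{Z}$ implies $Z\notin\mathcal{Z}$'' for $Z\subseteq W$, which reads downward closure in the wrong direction (it can be repaired, e.g.\ by applying the hypothesis to $Z\cup W'$ rather than to $Z$, but as written it is unjustified). Your version is self-contained, relies only on the downward closure of elements of $\mathbb{A}$ and on the observation that joins in $\mathbb{A}$ are ordinary unions, and isolates a reusable description of $\bh X\Rightarrow(\cdot)$; the paper's element-chasing version, once repaired, is marginally shorter but less transparent.
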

\begin{proof}
%
Assume that $W\in \bh X\Rightarrow (\mathcal{Y}\cup \mathcal{Z})$ and $W\notin \bh X\Rightarrow \mathcal{Z}$. Then $W'\subseteq X$ and $W'\notin \mathcal{Z}$ for some $W'\subseteq W$. Hence $W\notin \mathcal{Z}$. To show that $W\in \bh X\Rightarrow \mathcal{Y}$, let $Z\subseteq W\cap X$. Then by assumption, either  $Z\in \mathcal{Y}$ or $Z\in \mathcal{Z}$. However, $W\notin \mathcal{Z}$ implies that $Z\notin \mathcal{Z}$, and hence $Z\in \mathcal{Y}$, as required.
%
\end{proof}

The following lemma collects  relevant  properties of $\bh$:
\begin{lemma}
\label{lemma:properties of bh}
For all $X,Y\in \mathbb{B}$,
\begin{itemize}
\item[(a)] $\bh \bot_\mathbb{B}=\{\varnothing\}$  
and $\bh \top^\mathbb{B}=\top^\mathbb{A}$;
\item[(b)] $\bh(\bigcap_{i\in I}X_i)=\bigcap_{i\in I}\bh X_i$;
\item[(c)] $\bh(X^c\cup Y)=(\bh X)\Rightarrow (\bh Y)$.
\end{itemize}
\end{lemma}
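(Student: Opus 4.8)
The plan is to prove the three items by unfolding the definition $\bh S = \{S'\mid S'\subseteq S\}$ together with the definition of the Heyting implication $\Rightarrow$ on $\mathbb{A}$, treating (a) and (b) as routine and concentrating the real work on (c).

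For (a), I would simply note that $\bot_\mathbb{B}=\varnothing$ and $\top^\mathbb{B}=2^V$, so that $\bh\varnothing=\{S'\mid S'\subseteq\varnothing\}=\{\varnothing\}$ and $\bh 2^V=\mathcal{P}(2^V)=\top^\mathbb{A}$; the latter equality already appears in the discussion following Lemma~\ref{lemma: bh left adjoint of hb}. For (b), the quickest route is to invoke that same adjunction: since $\hb\dashv\bh$, the map $\bh$ is a right adjoint and hence preserves all meets, and meets in both $\mathbb{B}$ and $\mathbb{A}$ are given by intersection. Alternatively one checks directly that $S\subseteq\bigcap_{i\in I}X_i$ holds iff $S\subseteq X_i$ for every $i\in I$, which is exactly the asserted identity $\bh(\bigcap_{i\in I}X_i)=\bigcap_{i\in I}\bh X_i$.

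The substance is (c), which I would establish by a double inclusion after rewriting both sides. Unfolding the definitions gives $\bh(X^c\cup Y)=\{S\mid S\subseteq X^c\cup Y\}$ on the left, and $(\bh X)\Rightarrow(\bh Y)=\{S\mid \text{for every } S'\subseteq S,\ S'\subseteq X \text{ implies } S'\subseteq Y\}$ on the right. For the inclusion $\subseteq$, I would take $S\subseteq X^c\cup Y$ and an arbitrary $S'\subseteq S$ with $S'\subseteq X$; then $S'\subseteq S\cap X\subseteq(X^c\cup Y)\cap X=X\cap Y\subseteq Y$, so $S'$ lies in $\bh Y$, as required. For the converse $\supseteq$, I would argue worldwise: given $S$ in the right-hand set and $v\in S$ with $v\in X$, I apply the defining condition to the singleton $S'=\{v\}\subseteq S$, which satisfies $S'\subseteq X$, to conclude $\{v\}\subseteq Y$, i.e.\ $v\in Y$; hence every $v\in S$ lies in $X^c\cup Y$ and so $S\subseteq X^c\cup Y$.

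I expect the only delicate point to be this $\supseteq$ direction: the right-hand condition quantifies over all sub-teams $S'$, whereas the left-hand condition is a pointwise (world-by-world) containment, and bridging the two requires testing the sub-team condition on singletons. This is precisely where the Boolean, atomistic structure of $\mathbb{B}$ is used, and it is the order-theoretic reflection at the level of $\bh$ of the flatness phenomenon: the classical material implication $X^c\cup Y$ is carried by $\bh$ exactly to the genuine Heyting implication of $\mathbb{A}$.
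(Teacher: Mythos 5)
Your argument is correct. For (a) and (b) you do essentially what the paper does: (a) is treated as immediate, and (b) is the same direct unfolding you give as your alternative (the appeal to $\bh$ being a right adjoint of $\hb$ is an equally valid shortcut, already licensed by the discussion after Lemma \ref{lemma: bh left adjoint of hb}). The genuine difference is in (c). The paper proves it as a single chain of set identities, collapsing the universal quantifier over sub-teams $W$ of $Z$ by observing that the condition ``for all $W$, if $W\subseteq Z$ and $W\subseteq X$ then $W\subseteq Y$'' amounts to $Z\cap X\subseteq Y$, i.e.\ $Z\subseteq X^c\cup Y$; this needs only the instantiation $W:=Z\cap X$ and no atoms. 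You instead argue by double inclusion and, for the direction $(\bh X)\Rightarrow(\bh Y)\subseteq\bh(X^c\cup Y)$, instantiate the quantifier at singletons $\{v\}$, thereby using the atomistic structure of $\mathbb{B}$. Both routes are sound: the instantiation at $W=Z\cap X$ is marginally more general (it would survive in a non-atomic Boolean algebra), while your singleton test makes the link with flatness explicit, as you observe. Incidentally, your explicit double inclusion sidesteps a slip in the printed chain: the intermediate set $\{Z\mid \mbox{if } Z\subseteq X \mbox{ then } Z\subseteq Y\}$ is not literally equal to the sets on either side of it (it should read $\{Z\mid Z\cap X\subseteq Y\}$), although the end-to-end identity asserted by the lemma is of course correct.
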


 \begin{proof}
(a) Immediate.\\

\noindent 
\begin{tabular}{r c l}
(b) $\bh(\bigcap_{i\in I} X_i)$ &$ =$&$\{Z\mid Z\subseteq \bigcap_{i\in I} X_i\}$\\
&$ =$&$\{Z\mid  Z\subseteq X_i \mbox{ for all } i\in I\}$\\
&$ =$&$\{Z\mid  Z\in \bh X_i \text{ for all } i\in I\}$\\
&$ =$&$\bigcap_{i\in I} (\bh X_i).$\\
\end{tabular}


\noindent 
\begin{tabular}{r c l}
(c) $(\bh X)\Rightarrow (\bh Y)$ &$ =$&$\{Z\mid \mbox{for any } W, \mbox{ if } W\subseteq Z \mbox{ and } W\subseteq X \mbox{ then } W\subseteq Y\}$\\
&$ =$&$\{Z\mid  \mbox{ if }  Z\subseteq X \mbox{ then } Z\subseteq Y\}$\\
&$ =$&$\{Z\mid   Z\subseteq X^c\cup Y\}$\\
&$ =$&$\bh ( X^c\cup Y).$
\end{tabular}
\end{proof}

\subsection{Multi-type inquisitive logic}
\label{ssec:multi-type Inql}
The existence of the maps $\bh$, $\hb$ and $\hb^*$ motivates the introduction of the following language, the formulas of which are given in two types, $\mathsf{Flat}$ and $\mathsf{General}$, defined by the following simultaneous recursion:
\begin{center}
$\mathsf{Flat}\ni \alpha ::= \,p \mid 0 \mid \alpha \cand \alpha \mid \alpha \cra \alpha \quad \quad \mathsf{General}\ni  A ::= \,\bh \alpha \mid A \iand A \mid A \ior A \mid A \ira A$
\end{center}

Let $\cneg\alpha$ and $\alpha\cor\beta$ abbreviate $\alpha\cra 0$ and  $\cneg\alpha\cra\beta$ respectively.
Notice that a canonical assignment exists $\hat{\cdot}:\mathsf{Prop}\rightarrow \mathbb{B}$, defined by $p\mapsto \hat{p} := \{v\mid v(p) = 1\}$. This assignment can be extended to $\mathsf{Flat}$-formulas as usual via the homomorphic extension $\val{\cdot}_\mathbb{B}: \mathsf{Flat}\to \mathbb{B}$.
The homomorphic extension $\val{\cdot}_\mathbb{B}: \mathsf{Flat}\to \mathbb{B}$ can be composed with $\bh: \mathbb{B}\to \mathbb{A}$ so as to yield a second homomorphic extension $\val{\cdot}_\mathbb{A}: \mathsf{General}\to \mathbb{A}$.
The maps $\val{\cdot}_\mathbb{B}$ and
$\val{\cdot}_\mathbb{A}$ are defined as below:

\begin{center}
\begin{tabular}{r c l c r c l}
$\val{p}_\mathbb{B}$ &$ = $& $\hat{p}$
& \quad\quad\quad &
$\val{\bh\alpha}_\mathbb{A}$ &$ = $& $\bh\val{\alpha}_\mathbb{B}$\\
$\val{0}_\mathbb{B}$ &$ = $& $\varnothing$
&&
$\val{A\ior B}_\mathbb{A}$ &$ = $& $\val{A}_\mathbb{A}\cup\val{B}_\mathbb{A}$\\
$\val{\alpha\cand \beta}_\mathbb{B}$ &$ = $& $\val{\alpha}_\mathbb{B}\cap\val{\beta}_\mathbb{B}$
&&
$\val{A\iand B}_\mathbb{A}$ &$ = $& $\val{A}_\mathbb{A}\cap\val{B}_\mathbb{A}$\\
$\val{\alpha\cra \beta}_\mathbb{B}$ &$ = $& $(\val{\alpha}_\mathbb{B})^c\cup
\val{\beta}_\mathbb{B}$
&&
$\val{A\ira B}_\mathbb{A}$ &$ = $& $\val{A}_\mathbb{A}\Rightarrow\val{B}_\mathbb{A}$.\\
$\val{\alpha\cor \beta}_\mathbb{B}$ &$ = $& $\val{\alpha}_\mathbb{B}\cup\val{\beta}_\mathbb{B}$
&&
\\
\end{tabular}
\end{center}
The following lemma is an immediate consequence of the definitions of $\val{\cdot}_\mathbb{B}$ and  $\val{\cdot}_\mathbb{A}$, and of Lemma \ref{lemma:properties of bh}:
\begin{lemma}
For all $\mathsf{Flat}$-formulas $\alpha$ and $\beta$,
\begin{center}
\begin{tabular}{r c l c r c l}
$\val{\bh p}_\mathbb{A}$ &$ = $& $\bh\hat{p}$
& \quad\quad\quad\quad\quad &
$\val{\bh(\alpha\cand \beta)}_\mathbb{A}$ &$ = $& $\bh \val{\alpha}_\mathbb{B}\cap\bh
\val{\beta}_\mathbb{B}$\\
$\val{\bh 0}_\mathbb{A}$ &$ = $& $\{\varnothing\}$
&&
$\val{\bh(\alpha\cra \beta)}_\mathbb{A}$ &$ = $& $\bh \val{\alpha}_\mathbb{B}\Rightarrow\bh
\val{\beta}_\mathbb{B}$.\\
\end{tabular}
\end{center}
\end{lemma}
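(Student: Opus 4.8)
The plan is to establish each of the four identities independently, in every case following the same three-step pattern: first strip off the outermost \bh using the defining clause $\val{\bh\alpha}_\mathbb{A}=\bh\val{\alpha}_\mathbb{B}$ of the homomorphic extension $\val{\cdot}_\mathbb{A}$; then rewrite $\val{\cdot}_\mathbb{B}$ on the inner $\mathsf{Flat}$-formula according to the corresponding clause in the definition of $\val{\cdot}_\mathbb{B}$; and finally apply the appropriate item of Lemma \ref{lemma:properties of bh} to commute \bh past the relevant Boolean operation. Since all three ingredients are equalities, no inequalities or case distinctions arise, so each identity reduces to a direct chain of substitutions.

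For the first two identities, after stripping the outer \bh one reduces to computing \bh of an atom or of $\bot_\mathbb{B}$. Here $\val{\bh p}_\mathbb{A}=\bh\val{p}_\mathbb{B}=\bh\hat p$ is immediate from $\val{p}_\mathbb{B}=\hat p$, and $\val{\bh 0}_\mathbb{A}=\bh\val{0}_\mathbb{B}=\bh\varnothing=\bh\bot_\mathbb{B}=\{\varnothing\}$ uses item (a) of Lemma \ref{lemma:properties of bh}. For the conjunction case, I would expand $\val{\alpha\cand\beta}_\mathbb{B}=\val{\alpha}_\mathbb{B}\cap\val{\beta}_\mathbb{B}$ and then apply the binary instance of item (b), which yields $\bh(\val{\alpha}_\mathbb{B}\cap\val{\beta}_\mathbb{B})=\bh\val{\alpha}_\mathbb{B}\cap\bh\val{\beta}_\mathbb{B}$, as required.

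The only identity with any content is the one for implication, and it is also where the interaction between the two types becomes visible. After stripping the outer \bh, the clause for \cra in $\val{\cdot}_\mathbb{B}$ presents $\val{\alpha\cra\beta}_\mathbb{B}$ in exactly the Boolean shape $(\val{\alpha}_\mathbb{B})^c\cup\val{\beta}_\mathbb{B}$, i.e.\ $X^c\cup Y$ with $X=\val{\alpha}_\mathbb{B}$ and $Y=\val{\beta}_\mathbb{B}$. I would then invoke item (c) of Lemma \ref{lemma:properties of bh}, namely $\bh(X^c\cup Y)=(\bh X)\Rightarrow(\bh Y)$, to conclude $\val{\bh(\alpha\cra\beta)}_\mathbb{A}=\bh\val{\alpha}_\mathbb{B}\Rightarrow\bh\val{\beta}_\mathbb{B}$. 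The point worth flagging is that \bh sends the \emph{Boolean} material implication of $\mathbb{B}$ to the \emph{Heyting} implication $\Rightarrow$ of $\mathbb{A}$; this is precisely the non-formal content already packaged into Lemma \ref{lemma:properties of bh}(c). Consequently, once that lemma is in hand, the present statement is pure bookkeeping, and I do not expect any genuine obstacle beyond correctly matching each clause to its counterpart.
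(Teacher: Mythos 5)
Your proof is correct and follows exactly the route the paper intends: the paper gives no explicit proof but states the lemma as an immediate consequence of the defining clauses of $\val{\cdot}_\mathbb{B}$ and $\val{\cdot}_\mathbb{A}$ together with Lemma \ref{lemma:properties of bh}, which is precisely your three-step chain (strip the outer $\bh$, expand $\val{\cdot}_\mathbb{B}$, apply items (a), (b), (c) respectively). Your observation that item (c) carries the only real content --- turning Boolean material implication in $\mathbb{B}$ into Heyting implication in $\mathbb{A}$ --- matches the paper's intent.
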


Let us define the multi-type counterpart of flat formulas of  inquisitive logic: 
\begin{definition}
A formula $A\in \mathsf{General}$ is {\em flat} if for every team $S$, \[S\models A \quad \mbox{ iff }\quad \{v\}\models A\ \mbox{ for every } v\in S.\]
\end{definition}  
\begin{lemma}
\label{lemma:semantic flatness}
The following are equivalent for any $A\in \mathsf{General}$:\\
1. $A$ is flat;\\
2. $\val{A}_{\mathbb{A}} = \bh \hb(\val{A}_{\mathbb{A}})$.
\end{lemma}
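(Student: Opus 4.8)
The plan is to reformulate both conditions in purely algebraic terms and then observe that they coincide up to one free inclusion. Writing $\mathcal{X} := \val{A}_\mathbb{A}\in\mathbb{A}$, I would first unwind the support relation to the algebraic statement $S\models A$ iff $S\in\mathcal{X}$, so that the defining property of flatness of $A$ reads: for every team $S$,
\[
S\in\mathcal{X} \quad\Longleftrightarrow\quad \{v\}\in\mathcal{X}\ \text{ for every } v\in S.
\]
Here I would keep in mind that $\mathcal{X}\in\mathbb{A}$ is by construction downward closed, which is the feature that drives the whole argument.

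The key preliminary step is the characterization $\hb\mathcal{X}=\{v\mid \{v\}\in\mathcal{X}\}$. Since $\hb\mathcal{X}=\bigcup\mathcal{X}$ by definition, the inclusion $\supseteq$ is immediate, while $\subseteq$ uses downward closure: if $v\in S\in\mathcal{X}$, then $\{v\}\subseteq S$ forces $\{v\}\in\mathcal{X}$. Composing with $\bh$, I get $\bh\hb\mathcal{X}={\downarrow}(\hb\mathcal{X})=\{T\mid T\subseteq\hb\mathcal{X}\}$, which by the previous identity is exactly the collection of teams $T$ such that $\{v\}\in\mathcal{X}$ for every $v\in T$. Thus membership in $\bh\hb\mathcal{X}$ encodes precisely the right-hand side of the flatness biconditional.

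Next I would invoke the general inclusion $\mathcal{X}\subseteq\bh\hb\mathcal{X}$ recorded in \eqref{eq:soundness of rules}, valid for every $\mathcal{X}\in\mathbb{A}$. Consequently condition~2, namely $\mathcal{X}=\bh\hb\mathcal{X}$, is equivalent to the single reverse inclusion $\bh\hb\mathcal{X}\subseteq\mathcal{X}$. By the computation above, this inclusion says: for every team $T$, if $\{v\}\in\mathcal{X}$ for all $v\in T$ then $T\in\mathcal{X}$ — which is exactly the right-to-left direction of the flatness biconditional. The left-to-right direction of flatness holds for every $A$ whatsoever, since it is just downward closure of $\mathcal{X}$ applied to singletons $\{v\}\subseteq S$. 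Matching these two observations yields both implications of the lemma at once: condition~2 holds iff the only substantive direction of flatness holds iff $A$ is flat.

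I do not expect a serious obstacle, as the statement is essentially an unfolding of definitions. The two points that deserve care are the bridge identity $\hb\mathcal{X}=\{v\mid\{v\}\in\mathcal{X}\}$, where downward closure of $\mathcal{X}$ is indispensable, and the recognition that one of the two inclusions making up condition~2 is free from \eqref{eq:soundness of rules}, so that condition~2 aligns with exactly the single nontrivial direction of flatness. I would also note in passing that the empty-team case is harmless: $\varnothing$ belongs to every $\val{A}_\mathbb{A}$, so the quantification over $v\in S$ is treated uniformly.
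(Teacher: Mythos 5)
Your proof is correct and amounts to the same computation as the paper's: both reduce the claim to the identity $\bh \hb(\val{A}_{\mathbb{A}}) = \{S\mid \{v\}\in\val{A}_{\mathbb{A}}\ \text{for all}\ v\in S\}$, after which flatness is a restatement of condition~2. The only difference is presentational: the paper obtains this identity in one step from the adjunction $\hb^{\ast}\dashv\hb$ (Lemma \ref{lemma: bh left adjoint of hb}), whereas you re-derive the relevant instance by hand from downward closure of $\val{A}_{\mathbb{A}}$ and additionally note that one of the two inclusions is free by \eqref{eq:soundness of rules} — a harmless elaboration, not a different argument.
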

\begin{proof}
By definition, $A$ is flat iff  $\val{A}_{\mathbb{A}} = \{S\mid \hb^{\ast}(S)\subseteq \val{A}_{\mathbb{A}}\}$.
Moreover, the following chain of identities holds:
\begin{center}
\begin{tabular}{r c l l}
& & $\{X\mid \hb^{\ast}(X)\subseteq \val{A}_{\mathbb{A}}\}$\\
&$=$ & $\{X\mid X\subseteq \hb(\val{A}_{\mathbb{A}})\}$ & (Lemma \ref{lemma: bh left adjoint of hb})\\
&$=$ & $\bh \hb(\val{A}_{\mathbb{A}})$,\\
\end{tabular}
\end{center}
which completes the proof.
\end{proof}

We are now in a position to define the following translation of \Inql-formulas into formulas of the multi-type language introduced above:
\CPC-formulas $\chi$ and $\xi$ will be translated into $\mathsf{Flat}$-formulas via $\tau_c$, and \Inql-formulas $\phi$ and $\psi$ into $\mathsf{General}$-formulas via $\tau_i$ as follows:

\begin{center}
\begin{tabular}{r c l c r c l}
$\tau_c(p)$ &$=$& $p$ &$\quad\quad \quad$& $\tau_i(\chi)$ &$=$& $\bh\tau_c(\chi)$\\
$\tau_c(0)$ &$=$& $0$ && $\tau_i(\phi\ior \psi)$ &$=$& $\tau_i(\phi)\ior\tau(\psi)$\\
$\tau_c(\chi\iand \xi)$ &$=$& $\tau_c(\chi)\cand\tau(\xi)$ && $\tau_i(\phi\iand \psi)$ &$=$& $\tau_i(\phi)\iand\tau_i(\psi)$ \\
$\tau_c(\chi\ira \xi)$ &$=$& $\tau_c(\chi)\cra\tau(\xi)$ && $\tau_i(\phi\ira \psi)$ &$=$& $\tau_i(\phi)\ira\tau_i(\psi)$. \\
\end{tabular}
\end{center}

The translation above justifies the introduction of the following Hilbert-style presentation of the logic which is the natural multi-type counterpart of \Inql:
\begin{itemize}
\item Axioms 
\subitem (A1) \CPC axiom schemata for $\mathsf{Flat}$-formulas;
\subitem (A2) \IPC axiom schemata for $\mathsf{General}$-formulas;
\subitem (A3) $(\bh\alpha \ira (A \ior B)) \ira (\bh\alpha \ira A) \ior (\bh\alpha \ira B)$ 
\subitem (A4) $\ineg \ineg \bh \alpha \ira \bh \alpha$.
\end{itemize}
plus Modus Ponens rules for both $\mathsf{Flat}$-formulas and $\mathsf{General}$-formulas.

In the following section, we are going to introduce the calculus for this logic.

\section{Structural sequent calculus for multi-type inquisitive logic}
\label{sec:formal}
In the present section, we introduce the structural calculus for the multi-type inquisitive logic introduced at the end of Section \ref{ssec:multi-type Inql}.

\begin{itemize}
\item Structural and operational languages of type $\mathsf{Flat}$ and $\mathsf{General}$:



\begin{center}
\begin{tabular}{lcl}
$\mathsf{Flat}$ && $\mathsf{General}$\\
&&\\
$\Gamma ::= \,\Phi \mid \Gamma \,, \Gamma \mid \Gamma \CRA \Gamma\mid \HB X$ & & $X ::= \,\BH \Gamma\mid \HB^*\Gamma \mid X \,; X \mid X > X$ \\
 & & \\
$\alpha ::= \,p \mid 0 \mid \alpha \cand \alpha \mid \alpha \cra \alpha$ & & $A ::= \,\bh \alpha \mid A \iand A \mid A \ior A \mid A \ira A$ \\
\end{tabular}
\end{center}



\item Interpretation of structural $\mathsf{Flat}$ connectives as their operational (i.e.\ logical) counterparts:\footnote{\label{footnote:structural interpreted operational} We follow the notational conventions introduced in \cite{LORI}, according to which each structural connective in the upper row of the synoptic tables is interpreted as the logical connective(s) in the two slots below it in the lower row. Specifically, each of its occurrences in antecedent (resp.\ succedent) position is interpreted as the logical connective in the left-hand  (resp.\ right-hand) slot. Hence, for instance, the structural symbol $\CRA$ is interpreted as classical implication $\cra$ when occurring in succedent position and as classical disimplication $\cdra$  (i.e.\ $\alpha \cdra \beta : = \alpha \cand \cneg\beta$) when occurring in antecedent position.}

\begin{center}
\begin{tabular}{|r|c|c|c|c|c|c|}
\hline
\scriptsize{Structural symbols}    & \mc{2}{c|}{$\Phi$}     & \mc{2}{c|}{$,$}     & \mc{2}{c|}{$\sqsupset$} \\
\hline
\scriptsize{Operational symbols} & $(1)$ & $\phantom{(}0\phantom{)}$ & $\cand$ & $(\cor)$ & $(\cdra)$ & $\phantom{(}\cra\phantom{)}$ \\
\hline
\end{tabular}
\end{center}

\item Interpretation of structural $\mathsf{General}$ connectives as their operational    counterparts: 

\begin{center}
\begin{tabular}{|r|c|c|c|c|}
\hline
\scriptsize{Structural symbols}     & \mc{2}{c|}{$;$}   & \mc{2}{c|}{$>$}   \\
\hline
\scriptsize{Operational symbols} & $\iand$ & $\ior$ & $(\idra)$ & $\ira$ \\
\hline
\end{tabular}
\end{center}


\item Interpretation of multi-type connectives


\begin{center}
\begin{tabular}{|r|c|c|c|c|c|c|}
\hline
\scriptsize{Structural symbols}    & \mc{2}{c|}{$\HB^\ast$}                          & \mc{2}{c|}{$\HB$} & \multicolumn{2}{c|}{$\BH$} \\
\hline
\scriptsize{Operational symbols} & $(\hb^\ast)$ & $\phantom{(\hb^\ast)}$ & $(\hb)$ & $(\hb)$ & $\bh$ & $\bh$                     \\
\hline
\end{tabular}
\end{center}


\item Structural rules common to both types

\begin{center}
\begin{tabular}{cc}
\!\!\!\!\!\!
{\footnotesize{
\begin{tabular}{rl}
\mc{2}{c}{\ \ \ \ \ \ \ \
\AX$\Gamma \fCenter \alpha$
\AX$(\Sigma \fCenter \Delta)[\alpha]^{pre}$
\RightLabel{$Cut$}
\BI$(\Sigma \fCenter \Delta)[\Gamma \slash \alpha]^{pre}$
\DisplayProof
 }
 \\

 & \\

\AX$\Gamma \fCenter \Delta$
\doubleLine
\LeftLabel{$\Phi$}
\UI$\Phi \,, \Gamma \fCenter \Delta$
\DisplayProof
 &
\AX$\Gamma \fCenter \Delta$
\doubleLine
\RightLabel{$\Phi$}
\UI$\Gamma \fCenter \Phi \,, \Delta$
\DisplayProof
 \\

 & \\

\AX$\Gamma \fCenter \Delta$
\LeftLabel{$W$}
\UI$\Gamma \,, \Sigma \fCenter \Delta$
\DisplayProof &
\AX$\Gamma \fCenter \Delta$
\RightLabel{$W$}
\UI$\Gamma \fCenter \Delta \,, Z$
\DisplayProof \\

 & \\

\AX$\Gamma \,, \Gamma \fCenter \Delta$
\LeftLabel{$C$}
\UI$\Gamma \fCenter \Delta $
\DisplayProof &
\AX$\Gamma \fCenter \Delta \,, \Delta$
\RightLabel{$C$}
\UI$\Gamma \fCenter \Delta$
\DisplayProof \\

 & \\

\AX$\Gamma \,, \Delta \fCenter \Sigma$
\LeftLabel{$E$}
\UI$\Delta \,, \Gamma \fCenter \Sigma$
\DisplayProof &
\AX$\Gamma \fCenter \Delta \,, \Sigma$
\RightLabel{$E$}
\UI$\Gamma \fCenter \Sigma \,, \Delta$
\DisplayProof \\

 & \\

\AX$\Gamma \,, (\Delta \,, \Sigma) \fCenter \Pi$
\LeftLabel{$A$}
\UI$(\Gamma \,, \Delta) \,, \Sigma \fCenter \Pi$
\DisplayProof &
\AX$\Gamma \fCenter (\Delta \,, \Sigma) \,, \Pi$
\RightLabel{$A$}
\UI$\Gamma \fCenter \Delta \,, (\Sigma \,, \Pi)$
\DisplayProof \\

 & \\

\AX$(\Gamma \CRA \Delta) \,, \Sigma \fCenter \Pi$
\LeftLabel{G}
\UI$\Gamma \CRA (\Delta \,, \Sigma) \fCenter \Pi$
\DisplayProof
 &
\AX$\Pi \fCenter (\Gamma \CRA \Delta) \,, \Sigma$
\RightLabel{G}
\UI$\Pi \fCenter \Gamma \CRA (\Delta \,, \Sigma)$
\DisplayProof
 \\
\end{tabular}
}}

 &

{\footnotesize{
\begin{tabular}{rl}
\mc{2}{c}{\ \ \ \ \ \ \ \ \
\AX$X \fCenter A$
\AX$A \fCenter Y$
\RightLabel{$Cut$}
\BI$X \fCenter Y$
\DisplayProof
 }
 \\

 & \\

\AX$X \fCenter Y$
\doubleLine
\LeftLabel{$\BH \Phi$}
\UI$\BH \Phi \,; X \fCenter Y$
\DisplayProof
 &
\AX$X \fCenter Y$
\doubleLine
\RightLabel{$\BH \Phi$}
\UI$X \fCenter \BH \Phi \,; Y$
\DisplayProof
 \\

 & \\

\AX$X \fCenter Y$
\LeftLabel{$W$}
\UI$X \,; Z \fCenter Y$
\DisplayProof &
\AX$X \fCenter Y$
\RightLabel{$W$}
\UI$X \fCenter Y \,; Z$
\DisplayProof \\

 & \\

\AX$X \,; X \fCenter Y$
\LeftLabel{$C$}
\UI$X \fCenter Y $
\DisplayProof &
\AX$X \fCenter Y \,; Y$
\RightLabel{$C$}
\UI$X \fCenter Y$
\DisplayProof \\

 & \\

\AX$X \,; Y \fCenter Z$
\LeftLabel{$E$}
\UI$Y \,; X \fCenter Z$
\DisplayProof &
\AX$X \fCenter Y \,; Z$
\RightLabel{$E$}
\UI$X \fCenter Z \,; Y$
\DisplayProof \\

 & \\

\AX$X \,; (Y \,; Z) \fCenter W$
\LeftLabel{$A$}
\UI$(X \,; Y) \,; Z \fCenter W$
\DisplayProof &
\AX$X \fCenter (Y \,; Z) \,; W$
\RightLabel{$A$}
\UI$X \fCenter Y \,; (Z \,; W)$
\DisplayProof \\

 & \\

\AX$(X > Y) \,; Z \fCenter W$
\LeftLabel{G}
\UI$X > (Y \,; Z) \fCenter W$
\DisplayProof
 &
\AX$W \fCenter (X > Y) \,; Z$
\RightLabel{G}
\UI$W \fCenter X > (Y \,; Z)$
\DisplayProof
 \\
\end{tabular}
}}
 \\
\end{tabular}
\end{center}

\item Structural rules specific to the $\mathsf{Flat}$ type

\begin{center}
{\footnotesize
\begin{tabular}{cc}
\AXC{\phantom{$p \fCenter p$}}
\LeftLabel{$Id$}
\UI$p \fCenter p$
\DisplayProof
 &
\AX$\Pi \fCenter \Gamma \CRA (\Delta \,, \Sigma)$
\RightLabel{CG}
\UI$\Pi \fCenter (\Gamma \CRA \Delta) \,, \Sigma$
\DisplayProof
 \\
\end{tabular}
 }
\end{center}

\item Structural rules governing the interaction between the two types:
\end{itemize}


\begin{center}
{\footnotesize
\begin{tabular}{c}
\AX$\Gamma \fCenter \Delta$
\RightLabel{bal}
\UI$\HB^\ast \Gamma \fCenter \BH \Delta$
\DisplayProof

\qquad

\AX$\Gamma \fCenter \Delta$
\RightLabel{d mon}
\UI$\BH \Gamma \fCenter \BH \Delta$
\DisplayProof

\qquad

\AX$X \fCenter Y$
\RightLabel{f mon}
\UI$\HB X \fCenter \HB Y$
\DisplayProof

 \\
 \\

\AX$\HB^\ast \Gamma \fCenter \Delta$
\RightLabel{f adj}
\doubleLine
\UI$\Gamma \fCenter \HB \Delta$
\DisplayProof

\qquad

\AX$\HB X \fCenter \Gamma$
\RightLabel{d adj}
\doubleLine
\UI$X \fCenter \BH \Gamma$
\DisplayProof

\qquad

\AX$\BH \HB X \fCenter Y$
\RightLabel{d-f elim}
\UI$X \fCenter Y$
\DisplayProof

 \\

 \\

\AX$X \fCenter \BH (\Gamma \CRA \Delta)$
\RightLabel{d dis}
\doubleLine
\UI$X \fCenter \BH \Gamma > \BH \Delta$
\DisplayProof

\qquad

\AX$\HB X \,, \HB Y \fCenter Z$
\RightLabel{f dis}
\doubleLine
\UI$\HB (X \,; Y) \fCenter Z$
\DisplayProof

 \\
 \\

\AX$X \fCenter \BH \Gamma > (Y \,; Z)$
\AX$X \fCenter \BH \Gamma > (Y \,; Z)$
\RightLabel{KP}
\BI$X \fCenter (\BH \Gamma > Y) \,; (\BH \Gamma > Z)$
\DisplayProof
 \\

\end{tabular}
 }
\end{center}

\begin{itemize}
\item Introduction rules for pure-type logical connectives:

\begin{center}
{\footnotesize
\begin{tabular}{rlrl}
\AXC{\phantom{$\bot \fCenter \Phi$}}
\UI$0 \fCenter \Phi$
\DisplayProof
 &
\AX$\Gamma \fCenter \Phi$
\UI$\Gamma \fCenter 0$
\DisplayProof
 &
\AX$A \fCenter X$
\AX$B \fCenter Y$
\BI$A \ior B \fCenter X \,; Y$
\DisplayProof
 &
\AX$Z \fCenter A \,; B$
\UI$Z \fCenter A \ior B $
\DisplayProof
 \\

 & & & \\

\AX$\alpha \,, \beta \fCenter \Gamma$
\UI$\alpha \cand \beta \fCenter \Gamma$
\DisplayProof
 &
\AX$\Gamma \fCenter \alpha$
\AX$\Delta \fCenter \beta$
\BI$\Gamma \,, \Delta \fCenter \alpha \cand \beta$
\DisplayProof
 &
\AX$A \,; B \fCenter Z$
\UI$A \iand B \fCenter Z$
\DisplayProof
 &
\AX$X \fCenter A$
\AX$Y \fCenter B$
\BI$X \,; Y \fCenter A \iand B$
\DisplayProof
 \\

 & & & \\

\AX$\Gamma \fCenter \alpha$
\AX$\beta \fCenter \Delta$
\BI$\alpha \cra \beta \fCenter \Gamma \CRA \Delta$
\DisplayProof
 &
\AX$\Gamma \fCenter \alpha \CRA \beta$
\UI$\Gamma \fCenter \alpha \cra \beta $
\DisplayProof
 &
\AX$X \fCenter A$
\AX$B \fCenter Y$
\BI$A \ira B \fCenter X > Y$
\DisplayProof
 &
\AX$Z \fCenter A > B$
\UI$Z \fCenter A \ira B $
\DisplayProof
 \\
\end{tabular}
 }
\end{center}

\item Introduction rules for $\bh$:

\begin{center}
{\footnotesize
\begin{tabular}{c}
\AX$\BH \alpha \fCenter X$
\UI$\bh \alpha \fCenter X$
\DisplayProof
\quad
\AX$X \fCenter \BH \alpha$
\UI$X \fCenter \bh \alpha$
\DisplayProof
 \\
\end{tabular}
 }
\end{center}

\end{itemize}


\section{Properties of the calculus}
\label{sec:properties}
In the present section, we discuss the soundness of the rules of the  calculus introduced in section \ref{sec:formal}, as well as its being able to capture flatness syntactically. The completeness of the calculus is discussed in section \ref{sec:completeness}

\subsection{Soundness}

As is typical of structural calculi, in order to prove the soundness of the rules,
structural sequents will be translated into operational sequents of the appropriate type, and operational sequents  will be interpreted
according to their type. Specifically,  each atomic proposition $p\in \mathsf{Prop}$ is assigned to the team $\val{p}: = \{v\in 2^V\mid v(p) = 1\}$.

In order to translate structures as operational terms, structural connectives need to be translated as logical connectives. To this effect,
 structural connectives are associated with one or more logical connectives,  and any given occurrence of a structural connective is translated as one or the other, according to its (antecedent or succedent) position, as indicated in the synoptic tables at the beginning of section \ref{sec:formal}. This procedure is completely standard, and is discussed in detail in \cite{GAV,Multitype,LORI}.

Sequents $A\vdash B$ (resp.\ $\alpha\vdash \beta$) will be interpreted as inequalities (actually inclusions) $\val{A}\leq \val{B}$ (resp.\ $\val{\alpha}\leq \val{\beta}$) in $\mathbb{A}$ (resp.\ $\mathbb{B}$); rules $(a_i\vdash b_i\mid i\in I)/c\vdash d$ will be interpreted as implications of the form ``if $\val{a_i}\subseteq \val{b_i}_Z$ for every $i\in I$, then $\val{c}\subseteq \val{d}$''.
Following this procedure, it is easy to see that:

\begin{itemize}
\item the soundness of  (d mon) and (f mon) follows from the monotonicity of the semantic operations $\bh$ and $\hb$ respectively (cf.\ discussion after Lemma \ref{lemma: bh left adjoint of hb});
    
\item  the soundness of  (d-f elim) and (bal) follows from  the observations in \eqref{eq:soundness of rules};
\item the soundness of (d adj) and (f adj) follows from Lemma \ref{lemma: bh left adjoint of hb};
\item the soundness of (f dis) follows from the fact that the semantic operation $\hb$ distributes over intersections;
\item the soundness of (d dis) follows from Lemma \ref{lemma:properties of bh} (c);
\item the soundness of  (KP)  follows from Lemma \ref{lemma: soundness of rules}.
\end{itemize}
The proof of the soundness of the remaining rules is well known and is omitted.
\subsection{Syntactic flatness captured by the calculus}
Lemma \ref{lemma:semantic flatness} provided a semantic identification of flat $\mathsf{General}$-formulas as those the extension of which is in the image of the semantic $\bh$. The following lemma provides a similar identification with syntactic means.
\begin{lemma}
If a formula is of the following shape $A ::= \bh \alpha \mid A \iand A \mid A \ira A$, then $A \dashv\vdash \bh \alpha$ for some $\alpha$.
\end{lemma}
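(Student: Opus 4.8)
The plan is to induct on the structure of $A$ as generated by the grammar $A ::= \bh\alpha \mid A\iand A \mid A\ira A$, exhibiting an explicit flat witness in each case. The base case $A=\bh\alpha$ takes the witness $\alpha$ itself, with $\bh\alpha\dashv\vdash\bh\alpha$ coming from the generalized identity (derivable by the usual induction on formulas from $Id$ and the introduction rules). For $A=B\iand C$ I take $\beta\cand\gamma$, and for $A=B\ira C$ I take $\beta\cra\gamma$, where by the induction hypothesis $B\dashv\vdash\bh\beta$ and $C\dashv\vdash\bh\gamma$. Each inductive step splits into a \emph{congruence} part and a \emph{collapse} part. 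Congruence, i.e.\ $B\iand C\dashv\vdash\bh\beta\iand\bh\gamma$ and $B\ira C\dashv\vdash\bh\beta\ira\bh\gamma$, follows directly from the two-sided introduction rules for $\iand$ and $\ira$ (feeding the hypotheses to the $\ira$-left rule in the correct polarity, since $\ira$ is antitone in its first coordinate); no appeal to $Cut$ is needed here. It then remains to prove the two collapse equivalences
\[ \bh\beta\iand\bh\gamma\dashv\vdash\bh(\beta\cand\gamma) \qquad\text{and}\qquad \bh\beta\ira\bh\gamma\dashv\vdash\bh(\beta\cra\gamma), \]
which are the syntactic counterparts of Lemma~\ref{lemma:properties of bh}(b) and (c).

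Throughout I rely on two derivable sequents recording $\hb\bh=\mathrm{id}$: the interderivability $\bh\alpha\dashv\vdash\BH\alpha$ (obtained from $\BH\alpha\vdash\BH\alpha$, itself from the identity via (d mon), using the two $\bh$-introduction rules), and $\HB\bh\alpha\vdash\alpha$ (obtained from $\bh\alpha\vdash\BH\alpha$ by (d adj)). The conjunction equivalence is then smooth. For right-to-left I derive $\beta\cand\gamma\vdash\beta$ and $\beta\cand\gamma\vdash\gamma$ in the flat type, lift them by (d mon) and the $\bh$-rules to $\bh(\beta\cand\gamma)\vdash\bh\beta$ and $\bh(\beta\cand\gamma)\vdash\bh\gamma$, and combine with $\iand$-right followed by contraction. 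For left-to-right I apply $\iand$-left, then (d adj) to move the succedent under $\HB$, then (f dis) to split $\HB(\bh\beta;\bh\gamma)$ into $\HB\bh\beta,\HB\bh\gamma$, and finally $\cand$-right reduced to the two instances of $\HB\bh\alpha\vdash\alpha$.

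The implication equivalence is the main obstacle, and is where the interaction rule (d dis) does the real work. The clean pivot is $\bh\beta\ira\bh\gamma\dashv\vdash\BH\beta>\BH\gamma$, relating an operational general implication to the purely structural $>$ of the two $\BH$-images. Once this is available, both inclusions follow by routine lifting: for $\bh\beta\ira\bh\gamma\vdash\bh(\beta\cra\gamma)$ one pivots to $\BH\beta>\BH\gamma$, rewrites it as $\BH(\beta\CRA\gamma)$ by (d dis), descends to the flat sequent $\HB(\bh\beta\ira\bh\gamma)\vdash\beta\CRA\gamma$ by (d adj), applies the flat $\cra$-right rule, and re-ascends by (d adj) and the right $\bh$-rule; the reverse inclusion starts from $\beta\cra\gamma\vdash\beta\CRA\gamma$, lifts by (d mon), applies (d dis), collapses the antecedent by the left $\bh$-rule, converts the succedent $\BH\beta>\BH\gamma$ to $\bh\beta>\bh\gamma$ via the pivot, and closes with $\ira$-right.

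The delicate point is therefore establishing the pivot in both directions, which amounts to converting the structural constituents $\BH\beta$ and $\BH\gamma$ sitting \emph{inside} the compound structure $\BH\beta>\BH\gamma$ into the operational formulas $\bh\beta,\bh\gamma$ and back. The $\bh$-introduction rules only act on a constituent occupying an entire antecedent or succedent, so each conversion must first \emph{display} the relevant operational formula as the whole displayed part — using the residuation (display) postulates for $;$ and $>$ together with exchange — then perform a single application of $Cut$ against the appropriate half of $\bh\alpha\dashv\vdash\BH\alpha$ (legitimate precisely because, once displayed, the cut term is the operational $\bh\alpha$), and finally undisplay. Carrying out this bookkeeping symmetrically on the antecedent constituent $\BH\beta$ and the succedent constituent $\BH\gamma$ produces the pivot and completes the induction.
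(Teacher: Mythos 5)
Your proposal is correct and follows the same skeleton as the paper's proof: induction on the grammar with witnesses $\alpha$, $\beta\cand\gamma$, $\beta\cra\gamma$, the conjunction collapse via (d~adj) and (f~dis), and the implication collapse via (d~dis) with the pivot $\bh\beta\ira\bh\gamma\vdash\BH\beta>\BH\gamma$. The one genuine divergence is at the step you single out as delicate. You obtain the pivot by first introducing $\ira$ against operational $\bh\beta,\bh\gamma$, then displaying each constituent and cutting against a half of $\bh\alpha\dashv\vdash\BH\alpha$ to trade operational for structural. The paper never needs this: since the two-premise left-introduction rules for $\cra$ and $\ira$ take \emph{arbitrary structures} in their premises, it simply instantiates them with the target structures directly --- e.g.\ $\ira$-left applied to $\BH\beta\vdash\bh\beta$ and $\bh\gamma\vdash\BH\gamma$ yields $\bh\beta\ira\bh\gamma\vdash\BH\beta>\BH\gamma$ in one step, and $\cra$-left applied to $\HB\bh\alpha\vdash\alpha$ and $\beta\vdash\beta$ plays the analogous role in the other direction. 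This keeps every derivation in the lemma cut-free, which is aesthetically preferable in a paper whose point is a syntactic cut-elimination theorem (your cuts are of course eliminable, so nothing is wrong). On the other side of the ledger, you are more careful than the paper in one respect: you isolate the congruence step $B\iand C\dashv\vdash\bh\beta\iand\bh\gamma$ (and likewise for $\ira$) and note it needs the antitone instantiation of $\ira$-left, whereas the paper silently identifies $B\iand C$ with $\bh\beta\iand\bh\gamma$ ``by induction hypothesis''; just note that chaining your congruence and collapse equivalences into the final $A\dashv\vdash\bh(\beta\cand\gamma)$ itself requires transitivity of $\dashv\vdash$, i.e.\ one more (harmless) cut.
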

\begin{proof}
Base case: $A = \bh \alpha$.

\begin{center}
{\footnotesize
\begin{tabular}{c}
\AX $\alpha \fCenter \alpha$
\UI $\BH \alpha \fCenter \BH \alpha$
\UI $\bh \alpha \fCenter \BH \alpha$
\UI $\bh \alpha \fCenter \bh \alpha$
\DisplayProof
\end{tabular}
 }
\end{center}

Inductive case 1: $A = B \iand C = \bh\beta \wedge \bh\gamma$ by induction hypothesis.

\begin{center}
{\footnotesize
\begin{tabular}{cc}
\AX$\alpha \fCenter \alpha$
\UI$\alpha \,, \beta \fCenter \alpha$
\UI$\alpha \cand \beta \fCenter \alpha$
\UI$\BH (\alpha \cand \beta) \fCenter \BH \alpha$
\UI$\BH (\alpha \cand \beta) \fCenter \bh \alpha$
\AX$\beta \fCenter \beta$
\UI$\alpha \cand \beta \fCenter \beta$
\UI$\alpha \,, \beta \fCenter \beta$
\UI$\BH (\alpha \cand \beta) \fCenter \BH \beta$
\UI$\BH (\alpha \cand \beta) \fCenter \bh \beta$
\BI$\BH (\alpha \cand \beta) \,; \BH (\alpha \cand \beta) \fCenter \bh \alpha \iand \bh \beta$
\UI$\BH (\alpha \cand \beta) \fCenter \bh \alpha \iand \bh \beta$
\UI$\bh (\alpha \cand \beta) \fCenter \bh \alpha \iand \bh \beta$
\DisplayProof
 &
\AX$\alpha \fCenter \alpha$
\UI$\BH \alpha \fCenter \BH \alpha$
\LeftLabel{d adj}
\UI$\HB \bh \alpha \fCenter \alpha$
\AX$\beta \fCenter \beta$
\UI$\BH \beta \fCenter \BH \beta$
\UI$\bh \beta \fCenter \BH \beta$
\LeftLabel{d adj}
\UI$\HB \bh \beta \fCenter \beta$
\BI$\HB \bh \alpha \,, \HB \bh \beta \fCenter \alpha \cand \beta$
\RightLabel{f dis}
\UI$\HB (\bh \alpha \,; \bh \beta) \fCenter \alpha \cand \beta$
\UI$\bh \alpha \,; \bh \beta \fCenter \BH \alpha \cand \beta$
\UI$\bh \alpha \,; \bh \beta \fCenter \bh (\alpha \cand \beta)$
\UI$\bh \alpha \iand \bh \beta \fCenter \bh (\alpha \cand \beta)$
\DisplayProof
 \\
\end{tabular}
 }
\end{center}

Inductive case 2: $A = B \ira C= \bh\beta \ira \bh\gamma$ by induction hypothesis.
\begin{center}
{\footnotesize
\begin{tabular}{cc}
\AX$\alpha \fCenter \alpha$
\UI$\BH \alpha \fCenter \BH \alpha$
\UI$\bh \alpha \fCenter \BH \alpha$
\LeftLabel{d adj}
\UI$\HB \bh \alpha \fCenter \alpha$
\AX$\beta \fCenter \beta$
\BI$\alpha \cra \beta \fCenter \HB \bh \alpha \CRA \beta$
\UI$\BH \alpha \cra \beta \fCenter \BH (\HB \bh \alpha \CRA \beta)$
\UI$\bh (\alpha \cra \beta) \fCenter \BH (\HB \bh \alpha \CRA \beta)$
\UI$\HB \bh (\alpha \cra \beta) \fCenter \HB \bh \alpha \CRA \beta$
\UI$\HB \bh \alpha \,, \HB \bh (\alpha \cra \beta) \fCenter \beta$
\LeftLabel{f dis}
\UI$\HB (\bh \alpha \,; \bh (\alpha \cra \beta)) \fCenter \beta$
\RightLabel{d adj}
\UI$\bh \alpha \,; \bh (\alpha \cra \beta) \fCenter \BH \beta$
\UI$\bh \alpha \,; \bh (\alpha \cra \beta) \fCenter \bh \beta$
\UI$\bh (\alpha \cra \beta) \fCenter \bh \alpha > \bh \beta$
\UI$\bh (\alpha \cra \beta) \fCenter \bh \alpha \ira \bh \beta$
\DisplayProof
 &
\AX$\alpha \fCenter \alpha$
\UI$\BH \alpha \fCenter \BH \alpha$
\UI$\BH \alpha \fCenter \bh \alpha$
\AX$\beta \fCenter \beta$
\UI$\BH \beta \fCenter \BH \beta$
\UI$\bh \beta \fCenter \BH \beta$
\BI$\bh \alpha \ira \bh \beta \fCenter \BH \alpha > \BH \beta$
\UI$\bh \alpha \ira \bh \beta \fCenter \BH (\alpha \CRA \beta)$
\LeftLabel{d adj}
\UI$\HB \bh \alpha \ira \bh \beta \fCenter \alpha \CRA \beta$
\UI$\HB \bh \alpha \ira \bh \beta \fCenter \alpha \cra \beta$
\RightLabel{d adj}
\UI$\bh \alpha \ira \bh \beta \fCenter \BH (\alpha \cra \beta)$
\UI$\bh \alpha \ira \bh \beta \fCenter \bh (\alpha \cra \beta)$
\DisplayProof

 \\
\end{tabular}
 }
\end{center}

\end{proof}

\section{Cut elimination}
\label{sec:cutelim}
In the present section, we prove that the calculus introduced in Section \ref{sec:formal} enjoys cut elimination and subformula property. Perhaps the most important feature of this calculus is that its cut elimination does not need to be proved brute-force, but can rather be inferred from a Belnap-style cut elimination meta-theorem, proved in \cite{TrendsXIII}, which holds for the so called {\em proper multi-type calculi}, the definition of which is reported below.
\subsection{Cut elimination meta-theorem for proper multi-type calculi}
\begin{theorem}
(cf.\ \cite[Theorem 4.1]{TrendsXIII}) Every proper multi-type calculus enjoys cut elimination and subformula property.
\end{theorem}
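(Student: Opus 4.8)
The statement is the abstract cut-elimination meta-theorem for proper multi-type calculi, and since it is quoted verbatim from \cite[Theorem 4.1]{TrendsXIII}, the most direct route is to invoke that result; what follows is a sketch of how I would reconstruct its proof. The plan is to run Belnap's original strategy, lifted to the multi-type setting: cut is shown to be admissible by a principal induction on the complexity of the cut formula, with a subordinate induction on the cut height, i.e.\ the sum of the heights of the two subderivations whose conclusions are the premises of the cut. The defining clauses of a proper multi-type calculus are engineered precisely so that, at each step, one of these two parameters strictly decreases; the argument is therefore uniform, inspecting the individual rules only through the roles those clauses assign to them.

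First I would dispose of the base cases, in which one premise of the cut is an axiom (an identity such as $p \fCenter p$, or a zero-premise rule), where the cut is simply erased or the other premise is returned unchanged. Next I would treat the parametric cases, in which the cut formula fails to be principal in at least one of the two premises. Here I would appeal to the conditions governing parameters --- shape- and position-alikeness, non-proliferation of congruent parameters, and closure under substitution of congruent parameters --- to permute the cut upward past the rule that left the cut formula parametric, thereby strictly lowering the cut height while keeping the cut formula fixed. In the calculus at hand these parametric conditions must be met by the interaction rules (bal), (d mon), (f mon), (d adj), (f adj), (d-f elim), (d dis), (f dis) and (KP) as well as by the pure-type structural rules, and it is exactly the verification that each of these is a legitimate parametric move that makes the calculus \emph{proper}.

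The crux is the principal case, in which the cut formula is principal in both premises at once. For each connective the defining clauses demand a principal-reduction transforming the cut on a compound formula into one or more cuts on its immediate subformulas, which have strictly smaller complexity, so that the principal induction hypothesis applies. The hard part will be the principal reductions for the bridging connective $\bh$ (with structural counterpart $\BH$) and its companion $\HB$, since these mediate between the two types: one must check that cutting a principal $\bh\alpha$ against a principal $\bh\alpha$ --- using the two introduction rules for $\bh$ together with the adjunction rules (d adj)/(f adj) and (d-f elim) --- yields a well-typed cut on $\alpha$ in the $\mathsf{Flat}$ type of strictly lower complexity, and analogously across the type boundary. Once every such reduction is available the induction closes and admissibility of cut follows; the subformula property is then read off from the fact that every rule other than cut has the constituents of its premises among the subterms of its conclusion, so that a cut-free derivation mentions only subformulas of its end-sequent. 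For the present paper this meta-theorem is applied rather than reproved, so the residual obligation --- discharged by the properness check on the rules listed above --- is merely to confirm that the calculus of Section \ref{sec:formal} satisfies every defining clause.
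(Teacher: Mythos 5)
Your proposal is consistent with the paper, which states this theorem only by citation to \cite[Theorem 4.1]{TrendsXIII} and offers no proof of its own; the residual obligation in the paper is exactly what you identify, namely checking that the calculus of Section \ref{sec:formal} satisfies the properness conditions, with C$'_8$ (the principal reductions, including the one for $\bh$) being the only one verified explicitly. Your sketch of the meta-theorem's internals --- induction on cut-formula complexity, parametric cases discharged via shape-, type- and position-alikeness together with closure under substitution (in Belnap's argument this is a global substitution along the history of the congruent parameters rather than a local rule-by-rule permutation, but the effect is the same), and principal cases via eliminability of matching principal constituents --- is a faithful reconstruction of the standard argument.
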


Proper multi-type calculi are those satisfying the following list of conditions:
\label{Trends:sec:quasi}


\paragraph*{C$_1$: Preservation of operational terms.\;} Each operational term occurring in a premise of an inference rule {\em inf} is a subterm of some operational term in the conclusion of {\em inf}.

\paragraph*{C$_2$: Shape-alikeness of parameters.\;} Congruent parameters (i.e.\ non-active terms in the application of a rule) are occurrences of the same structure.

\paragraph*{C$'_2$: Type-alikeness of parameters.\;}  Congruent parameters have exactly the same type. This condition bans the possibility that a parameter changes type along its history.

\paragraph*{C$_3$: Non-proliferation of parameters.\;} Each parameter in an inference rule {\em inf} is congruent to at most one constituent in the conclusion of {\em inf}.

\paragraph*{C$_4$: Position-alikeness of parameters.\;} Congruent parameters are either all precedent or all succedent parts of their respective sequents. In the case of calculi enjoying the display property, precedent and succedent parts are defined in the usual way (see \cite{Belnap}). Otherwise, these notions can still be defined by induction on the shape of the structures, by relying on the polarity of each coordinate of the structural connectives.

\paragraph*{C$'_5$: Quasi-display of principal constituents.\;} If an operational term $a$ is principal in the conclusion sequent $s$ of a derivation $\pi$, then $a$ is in display, unless $\pi$ consists only of its conclusion sequent $s$ (i.e.\ $s$ is an axiom).

\paragraph*{C$''_5$: Display-invariance of axioms.} If $a$ is principal in an axiom $s$, then $a$ can be isolated by applying Display Postulates and the new sequent is still an axiom.

\paragraph*{C$'_6$: Closure under substitution for succedent parts within each type.\;} Each rule is closed under simultaneous substitution of arbitrary structures for congruent operational terms occurring in succedent position, {\em within each type}.

\paragraph*{C$'_7$: Closure under substitution for precedent parts within each type.\;} Each rule is closed under simultaneous substitution of arbitrary structures for congruent operational terms occurring in precedent position, {\em within each type}.

\paragraph*{C$'_8$: Eliminability of matching principal constituents.\;}
This condition requests a standard Gentzen-style checking, which is now limited to the case in which  both cut formulas are {\em principal}, i.e.~each of them has been introduced with the last rule application of each corresponding subdeduction. In this case, analogously to the proof Gentzen-style, condition C$'_8$ requires being able to transform the given deduction into a deduction with the same conclusion in which either the cut is eliminated altogether, or is transformed in one or more applications of the cut rule, involving proper subterms of the original operational cut-term. In addition to this, specific to the multi-type setting is the requirement that the new application(s) of the cut rule be also {\em type-uniform} (cf.\ condition C$'_{10}$ below).

\paragraph*{C$'''_8$: Closure of axioms under surgical cut.} If $(x\vdash y)([a]^{pre}, [a]^{suc})$,  $a\vdash z [a]^{suc}$ and $v[a]^{pre}\vdash a$ are axioms, then $(x\vdash y)([a]^{pre}, [z/a]^{suc})$ and $(x\vdash y)([v/a]^{pre}, [a]^{suc})$  are again axioms.

\paragraph*{C$_9$: Type-uniformity of derivable sequents.} Each derivable sequent is type-uniform.\footnote{A sequent $x \vdash y$ is type-uniform if $x$ and $y$ are of the same type.}

\paragraph*{C$'_{10}$: Preservation of type-uniformity of cut rules.} All cut rules preserve type-uniformity.

\subsection{Cut elimination for the structural calculus for multi-type inquisitive logic}
To show that the calculus defined in Section \ref{sec:formal} enjoys cut elimination and subformula property, it is enough to show that it is a proper multi-type calculus, i.e., that verifies every condition in the list above.
All conditions except C$'_8$ are readily satisfied by inspection on the rules of the calculus. In what follows we verify C$'_8$.

 Condition C$'_8$ requires to check 
the cut elimination when both cut formulas are principal. Since principal formulas are always introduced in display, it is enough to show that applications of standard (rather than surgical) cuts can be either eliminated or replaced with (possibly surgical) cuts on formulas of strictly lower complexity. 

\paragraph*{Constant}

\begin{center}
{\footnotesize
\begin{tabular}{ccc}
\bottomAlignProof
\AXC{\ \ \ $\vdots$ \raisebox{1mm}{$\pi_1$}}
\noLine
\UI$\Gamma \fCenter \Phi$
\UI$\Gamma \fCenter 0$
\AX$0 \fCenter \Phi$
\BI$\Gamma \fCenter \Phi$
\DisplayProof
 & 
$\rightsquigarrow$
 & 
\bottomAlignProof
\AXC{\ \ \ $\vdots$ \raisebox{1mm}{$\pi_1$}}
\noLine
\UI$\Gamma \fCenter \Phi$
\DisplayProof 
\end{tabular}
 }
\end{center}

\paragraph*{Propositional variable}
\begin{center}
{\footnotesize
\begin{tabular}{ccc}
\bottomAlignProof
\AX$p \fCenter p$
\AX$p \fCenter p$
\BI$p \fCenter p$
\DisplayProof
 & 
$\rightsquigarrow$
 &
\bottomAlignProof
\AX$p \fCenter p$
\DisplayProof 
\end{tabular}
 }
\end{center}

\paragraph*{Classical conjunction}
\begin{center}
{\footnotesize
\begin{tabular}{ccc}
\bottomAlignProof
\AXC{\ \ \ $\vdots$ \raisebox{1mm}{$\pi_1$}}
\noLine
\UI$\Gamma \fCenter \alpha$
\AXC{\ \ \ $\vdots$ \raisebox{1mm}{$\pi_2$}}
\noLine
\UI$\Delta \fCenter \beta$
\BI$\Gamma \,, \Delta \fCenter \alpha \cand \beta$
\AXC{\ \ \ $\vdots$ \raisebox{1mm}{$\pi_3$}}
\noLine
\UI$\alpha \,, \beta \fCenter \Lambda$
\UI$\alpha \cand \beta \fCenter \Lambda$
\BI$\Gamma \,, \Delta  \fCenter \Lambda$
\DisplayProof

 & 
$\rightsquigarrow$
 & 
\bottomAlignProof
\AXC{\ \ \ $\vdots$ \raisebox{1mm}{$\pi_1$}}
\noLine
\UI$\Gamma \fCenter \alpha$
\AXC{\ \ \ $\vdots$ \raisebox{1mm}{$\pi_2$}}
\noLine
\UI$\Delta \fCenter \beta$
\AXC{\ \ \ $\vdots$ \raisebox{1mm}{$\pi_3$}}
\noLine
\UI$\alpha \,, \beta \fCenter \Lambda$
\UI$\beta \fCenter \alpha > \Lambda$
\BI$\Delta \fCenter \alpha > \Lambda$
\UI$\alpha \,,  \Delta\fCenter \Lambda$
\UI$\Delta \,, \alpha \fCenter \Lambda$
\UI$ \alpha \fCenter \Delta > \Lambda$
\BI$\Gamma \fCenter \Delta > \Lambda$
\UI$\Delta \,,\Gamma \fCenter \Lambda$
\UI$\Gamma \,,\Delta \fCenter \Lambda$
\DisplayProof

\end{tabular}
}
\end{center}

The cases for  $\cra$, $\iand$, $\ior$, $\ira$ are standard and similar to the one above.

\paragraph*{Downarrow}

\begin{center}
{\footnotesize 
\begin{tabular}{ccc}
\bottomAlignProof
\AXC{\ \ \ $\vdots$ \raisebox{1mm}{$\pi_3$}}
\noLine
\UI$X \fCenter \BH \alpha$
\UI$X \fCenter \bh \alpha$
\AXC{\ \ \ $\vdots$ \raisebox{1mm}{$\pi_3$}}
\noLine
\UI$\BH \alpha \fCenter Y$
\UI$\bh \alpha \fCenter Y$
\BI$ X\fCenter Y$
\DisplayProof

 & 
$\rightsquigarrow$
 & 

\bottomAlignProof
\AXC{\ \ \ $\vdots$ \raisebox{1mm}{$\pi_3$}}
\noLine
\UI$X \fCenter \BH \alpha$
\UI$F X \fCenter \alpha$
\AXC{\ \ \ $\vdots$ \raisebox{1mm}{$\pi_3$}}
\noLine
\UI$\BH \alpha \fCenter Y$
\BI$ \BH F X\fCenter Y$
\UI$ X\fCenter Y$
\DisplayProof
\end{tabular}
 }
\end{center}

\section{Conclusion}
The calculus introduced in the present paper is not a standard display calculus. This is due to the fact that, according to the order-theoretic analysis we gave, the axiom (A3) is not analytic inductive in the sense of \cite{corr-and-displ}. Hence, it is not possible to give a proper {\em display} calculus to the axiomatization of the multi-type inquisitive logic introduced in Section \ref{ssec:multi-type Inql}. In order to encode the (A3) axiom with a structural rule, we made the non standard choice of allowing the structural counterpart of $\bh$ in antecedent position, notwithstanding the fact that it is {\em not} a left adjoint. As a consequence, the display property does not hold for the  calculus introduced in the present paper. However, a generalization of the Belnap-style cut elimination meta-theorem holds which applies to it.  

Further directions of research will address the problem of extending this calculus to propositional dependence logic.

\newpage

\bibliographystyle{plain}
\bibliography{fan}

\appendix



\section{Completeness}
\label{sec:completeness}
\begin{center}
{\tiny 
\begin{tabular}{c}

\!\!\!\!\!\!\!\!\!\!\!\!\!\!\!\!\!\!\!\!\!\!\!\!\!\!\!\!\!\!\!\!\!\!\!\!\!\!\!\!\!\!

\AX$\alpha \fCenter \alpha$
\UI$\alpha \fCenter 0 \,, \alpha$
\UI$\alpha \,, \Phi \fCenter 0 \,, \alpha$
\UI$\Phi \fCenter \alpha \CRA (0 \,, \alpha)$
\RightLabel{CG}
\UI$\Phi \fCenter (\alpha \CRA 0) \,, \alpha$
\UI$\Phi \fCenter \alpha \,, (\alpha \CRA 0)$
\UI$\alpha \CRA \Phi \fCenter \alpha \CRA 0$
\UI$\BH (\alpha \CRA \Phi) \fCenter \BH (\alpha \CRA 0)$
\RightLabel{d dis}
\UI$\BH (\alpha \CRA \Phi) \fCenter \BH \alpha > \BH 0$
\UI$\BH \alpha \,; \BH (\alpha \CRA \Phi) \fCenter \BH 0$
\UI$\BH \alpha \,; \BH (\alpha \CRA \Phi) \fCenter \bh 0$
\UI$\BH (\alpha \CRA \Phi) \,; \BH \alpha \fCenter \bh 0$
\UI$\BH \alpha \fCenter \BH (\alpha \CRA \Phi) > \bh 0$
\UI$\bh \alpha \fCenter \BH (\alpha \CRA \Phi) > \bh 0$
\UI$\BH (\alpha \CRA \Phi) \,; \bh \alpha \fCenter \bh 0$
\UI$\bh \alpha \,; \BH (\alpha \CRA \Phi) \fCenter \bh 0$
\UI$\BH (\alpha \CRA \Phi) \fCenter \bh \alpha > \bh 0$
\UI$\BH (\alpha \CRA \Phi) \fCenter \bh \alpha \ira \bh 0$
\RightLabel{def}
\UI$\BH (\alpha \CRA \Phi) \fCenter \ineg \bh \alpha$

\AX$0 \fCenter \Phi$
\RightLabel{d mon}
\UI$\BH 0 \fCenter \BH \Phi$
\UI$\bh 0 \fCenter \BH \Phi$

\BI$\ineg \bh \alpha \ira \bh 0 \fCenter \BH (\alpha \CRA \Phi) > \BH \Phi$
\LeftLabel{def}
\UI$\ineg \ineg \bh \alpha \fCenter \BH (\alpha \CRA \Phi) > \BH \Phi$
\RightLabel{d dis}
\UI$\ineg \ineg \bh \alpha \fCenter \BH ((\alpha \CRA \Phi) \CRA \Phi)$
\RightLabel{d adj}
\UI$\HB \ineg \ineg \bh \alpha \fCenter (\alpha \CRA \Phi) \CRA \Phi$
\UI$(\alpha \CRA \Phi) \,, \HB \ineg \ineg \bh \alpha \fCenter \Phi$
\LeftLabel{G}
\UI$\alpha \CRA (\Phi \,, \HB \ineg \ineg \bh \alpha) \fCenter \Phi$
\UI$\Phi \,, \HB \ineg \ineg \bh \alpha \fCenter \alpha \,, \Phi$
\UI$\HB \ineg \ineg \bh \alpha \fCenter \alpha \,, \Phi$
\UI$\HB \ineg \ineg \bh \alpha \fCenter \alpha$
\RightLabel{d adj}
\UI$\ineg \ineg \bh \alpha \fCenter \BH \alpha$
\UI$\ineg \ineg \bh \alpha \fCenter \bh \alpha$
\DisplayProof


\!\!\!\!\!\!\!\!\!\!\!\!\!\!\!\!\!\!\!\!\!\!\!\!\!\!\!\!\!\!\!\!\!\!\!\!\!\!\!\!\!\!\!\!\!\!\!\!\!\!\!\!\!\!\!\!\!\!\!\!\!\!\!\!\!

\AX$\alpha \fCenter \alpha$
\RightLabel{d mon}
\UI$\BH \alpha \fCenter \BH \alpha$
\UI$\BH \alpha \fCenter \bh \alpha$
\UI$\bh \alpha \fCenter \bh \alpha$
\AX$B \fCenter B$
\AX$C \fCenter C$
\BI$B \ior C \fCenter B\,; C$
\BI$\bh \alpha \ira (B \ior C) \fCenter \BH \alpha > (B\,; C)$
\AX$\alpha \fCenter \alpha$
\RightLabel{d mon}
\UI$\BH \alpha \fCenter \BH \alpha$
\UI$\BH \alpha \fCenter \bh \alpha$
\AX$B \fCenter B$
\AX$C \fCenter C$
\BI$B \ior C \fCenter B\,; C$
\BI$\bh \alpha \ira (B \ior C) \fCenter \BH \alpha > (B\,; C)$
\RightLabel{KP}
\BI$\bh \alpha \ira (B \ior C) \fCenter (\BH \alpha > B)\,; (\BH \alpha > C)$
\UI$(\BH \alpha > B) > \bh \alpha \ira (B \ior C) \fCenter \BH \alpha > C$
\UI$\BH \alpha \,; ((\BH \alpha > B) > \bh \alpha \ira (B \ior C)) \fCenter C$
\UI$((\BH \alpha > B) > \bh \alpha \ira (B \ior C)) \,; \BH \alpha \fCenter C$
\UI$\BH \alpha \fCenter ((\BH \alpha > B) > \bh \alpha \ira (B \ior C)) > C$
\UI$\bh \alpha \fCenter ((\BH \alpha > B) > \bh \alpha \ira (B \ior C)) > C$
\UI$((\BH \alpha > B) > \bh \alpha \ira (B \ior C)) \,; \bh \alpha \fCenter C$
\UI$\bh \alpha \,; ((\BH \alpha > B) > \bh \alpha \ira (B \ior C)) \fCenter C$

\UI$(\BH \alpha > B) > \bh \alpha \ira (B \ior C) \fCenter \bh \alpha > C$
\UI$(\BH \alpha > B) > \bh \alpha \ira (B \ior C) \fCenter \bh \alpha \ira C$
\UI$\bh \alpha \ira (B \ior C) \fCenter (\BH \alpha > B) \,; \bh \alpha \ira C$
\UI$\bh \alpha \ira (B \ior C) \fCenter \bh \alpha \ira C \,; (\BH \alpha > B)$
\UI$\bh \alpha \ira C > \bh \alpha \ira (B \ior C) \fCenter \BH \alpha > B$
\UI$\BH \alpha \,; (\bh \alpha \ira C > \bh \alpha \ira (B \ior C)) \fCenter B$
\UI$(\bh \alpha \ira C > \bh \alpha \ira (B \ior C)) \,; \BH \alpha \fCenter B$
\UI$\BH \alpha \fCenter (\bh \alpha \ira C > \bh \alpha \ira (B \ior C)) > B$
\UI$\bh \alpha \fCenter (\bh \alpha \ira C > \bh \alpha \ira (B \ior C)) > B$
\UI$(\bh \alpha \ira C > \bh \alpha \ira (B \ior C)) \,; \bh \alpha \fCenter B$
\UI$\bh \alpha \,; (\bh \alpha \ira C > \bh \alpha \ira (B \ior C)) \fCenter B$
\UI$\bh \alpha \ira C > \bh \alpha \ira (B \ior C) \fCenter \bh \alpha > B$

\UI$\bh \alpha \ira C > \bh \alpha \ira (B \ior C) \fCenter \bh \alpha \ira B$
\UI$\bh \alpha \ira (B \ior C) \fCenter \bh \alpha \ira C \,; \bh \alpha \ira B$
\UI$\bh \alpha \ira (B \ior C) \fCenter \bh \alpha \ira B \,; \bh \alpha \ira C$
\UI$\bh \alpha \ira (B \ior C) \fCenter (\bh \alpha \ira B) \ior (\bh \alpha \ira C)$
\DisplayProof
\end{tabular}
 }
\end{center} 


\end{document}